\titleformat{\section}{\bfseries\scshape\Large}{\thesection}{1em}{}{}
\titleformat*{\subsection}{\scshape\bfseries\large}
\numberwithin{equation}{section}
\newtheorem{thm}{Theorem}[section]
\newtheorem{cor}[thm]{Corollary}
\theoremstyle{definition}
\newtheorem{defn}[thm]{Definition}
\newtheorem{hyp}{Hypothesis}
\renewcommand*{\thehyp}{\Alph{hyp}}
\newtheorem*{conj}{Conjecture}
\theoremstyle{remark}
\newtheorem{rem}[thm]{Remark}
\newtheorem{ex}[thm]{Example}
\crefname{hyp}{Hypothesis}{Hypotheses}
\Crefname{hyp}{Hypothesis}{Hypotheses}
\crefname{lem}{Lemma}{Lemmas}
\Crefname{lem}{Lemma}{Lemmas}
\crefname{thm}{Theorem}{Theorems}
\Crefname{thm}{Theorem}{Theorems}
\crefname{prop}{Proposition}{Propositions}
\Crefname{prop}{Proposition}{Propositions}
\crefname{enumi}{}{}
\Crefname{enumi}{}{}
\crefname{equation}{}{}
\Crefname{equation}{}{}
\crefname{rem}{Remark}{Remarks}
\Crefname{rem}{Remark}{Remarks}
\renewcommand{\@upn}{} 
\patchcmd{\endthm}{\@endpefalse}{}{}{}
\patchcmd{\endcor}{\@endpefalse}{}{}{}
\patchcmd{\endlem}{\@endpefalse}{}{}{}
\patchcmd{\endprop}{\@endpefalse}{}{}{}
\patchcmd{\endproof}{\@endpefalse}{}{}{}
\newlist{enumthm}{enumerate}{1} 
\setlist[enumthm]{label=\upshape(\roman*),ref=\thethm~(\roman*)}  
\newlist{enumcor}{enumerate}{1}
\setlist[enumcor]{label=\upshape(\roman*),ref=\thecor~(\roman*)}
\newlist{enumlem}{enumerate}{1}
\setlist[enumlem]{label=\upshape(\roman*),ref=\thelem~(\roman*)}
\newlist{enumprop}{enumerate}{1}
\setlist[enumprop]{label=\upshape(\roman*),ref=\theprop~(\roman*)}
\newlist{enumhyp}{enumerate}{1}
\setlist[enumhyp]{label=\upshape(\roman*),ref=\thehyp~(\roman*)}
\newlist{enumproof}{enumerate*}{1}
\setlist[enumproof]{label=\upshape(\roman*)}
\newlist{enumdef}{enumerate}{1}
\setlist[enumdef]{label=\upshape(\roman*),ref=\thedefn~(\roman*)}
\newcounter{subcreftmpcnt} %
\newcommand\romansubformat[1]{(\roman{#1})} 
\def\subcref{\@ifstar\@@subcref\@subcref}
\newcommand\@subcref[2][\romansubformat]{%
	\ifcsname r@#2@cref\endcsname
	\cref@getcounter {#2}{\mylabel}%
	\setcounter{subcreftmpcnt}{\mylabel}%
	\hyperref[#2]{\romansubformat{subcreftmpcnt}}%
	\else ?? \fi}   
\newcommand\@@subcref[2][\romansubformat]{%
	\ifcsname r@#2@cref\endcsname
	\cref@getcounter {#2}{\mylabel}%
	\setcounter{subcreftmpcnt}{\mylabel}%
	\romansubformat{subcreftmpcnt}%
	\else ?? \fi}   
\DeclareRobustCommand{\crefnosort}[1]{%
	\begingroup\@cref@sortfalse\cref{#1}\endgroup
}
\def\endstepsymbol{$\lozenge$}
\def\endclaimsymbol{$\lozenge$}
\newcounter{proofstep}
\newenvironment{step}{\refstepcounter{proofstep}\\[.2em]\noindent
	{\em Step \arabic{proofstep}.}}{\endstepsymbol}
\crefname{proofstep}{Step}{Steps}
\Crefname{proofstep}{Step}{Steps}
\newcounter{proofclaim}
\crefname{proofclaim}{Claim}{Claims}
\Crefname{proofclaim}{Claim}{Claims}
\newcommand{\cC}{{\mathcal C}}
\newcommand{\cD}{{\mathcal D}}\newcommand{\cF}{{\mathcal F}}
\newcommand{\cH}{{\mathcal H}}\newcommand{\cI}{{\mathcal I}}
\newcommand{\cP}{{\mathcal P}}
\newcommand{\BC}{{\mathbb C}}
\newcommand{\BN}{{\mathbb N}}
\newcommand{\BR}{{\mathbb R}}
\newcommand{\DSE}{{\mathds E}}
\newcommand{\DSP}{{\mathds P}}
\newcommand{\dsone}{{\mathds 1}}
\newcommand{\sfc}{{\mathsf c}}
\newcommand{\sfd}{{\mathsf d}}
\newcommand{\sfm}{{\mathsf m}}
\newcommand{\sfs}{{\mathsf s}}
\newcommand{\sfy}{{\mathsf y}}
\newcommand{\IN}{\BN}\newcommand{\IR}{\BR}\newcommand{\IC}{\BC}
\newcommand{\PP}{\DSP}\newcommand{\EE}{\DSE}
\newcommand{\HS}{\cH}
\newcommand{\eps}{\varepsilon}\newcommand{\ph}{\varphi}
\newcommand{\Id}{\dsone} \renewcommand{\d}{\sfd}
\newcommand{\supp}{\operatorname{supp}}\DeclareMathOperator*{\esssup}{ess\,sup}\DeclareMathOperator*{\essinf}{ess\,inf}
\DeclareMathOperator*{\slim}{s-lim}\DeclareMathOperator*{\wlim}{w-lim}
\newcommand{\wt}[1]{\widetilde{#1}}\renewcommand{\bar}[1]{\overline{#1}}
\DeclareFontFamily{U}{mathx}{\hyphenchar\font45}
\DeclareFontShape{U}{mathx}{m}{n}{
	<5> <6> <7> <8> <9> <10>
	<10.95> <12> <14.4> <17.28> <20.74> <24.88>
	mathx10
}{}
\DeclareSymbolFont{mathx}{U}{mathx}{m}{n}
\DeclareMathAccent{\widecheck}{0}{mathx}{"71}
\DeclareMathAccent{\wideparen}{0}{mathx}{"75}
\DeclareFontFamily{OMX}{MnSymbolE}{}
\DeclareFontShape{OMX}{MnSymbolE}{m}{n}{
	<-6>  MnSymbolE5
	<6-7>  MnSymbolE6
	<7-8>  MnSymbolE7
	<8-9>  MnSymbolE8
	<9-10> MnSymbolE9
	<10-12> MnSymbolE10
	<12->   MnSymbolE12}{}
\DeclareSymbolFont{mnlargesymbols}{OMX}{MnSymbolE}{m}{n}
\DeclareMathDelimiter{\llangle}{\mathopen}{mnlargesymbols}{'164}{mnlargesymbols}{'164}
\DeclareMathDelimiter{\rrangle}{\mathclose}{mnlargesymbols}{'171}{mnlargesymbols}{'171}
\DeclareMathDelimiter{\lsem}{\mathopen}{mnlargesymbols}{'102}{mnlargesymbols}{'102}
\DeclareMathDelimiter{\rsem}{\mathclose}{mnlargesymbols}{'107}{mnlargesymbols}{'107}
\DeclareMathDelimiter{\langlebar}{\mathopen}{mnlargesymbols}{'152}{mnlargesymbols}{'152}
\DeclareMathDelimiter{\ranglebar}{\mathclose}{mnlargesymbols}{'157}{mnlargesymbols}{'157}
\DeclareMathDelimiter{\lWavy}{\mathopen}{mnlargesymbols}{'137}{mnlargesymbols}{'137}
\DeclareMathDelimiter{\rWavy}{\mathopen}{mnlargesymbols}{'137}{mnlargesymbols}{'137}
\newcommand{\FGamma}{\Gamma}
\newcommand{\FS}{\cF}\newcommand{\dG}{\sfd\FGamma}
\def\titlename{\scshape  Ground States in the Infrared-Critical Spin Boson Model}
\title{\LARGE\scshape   Existence of Ground States in the Infrared-Critical Spin Boson Model}
\newcommand{\shortauthors}{B. Hinrichs}
\author{Benjamin Hinrichs\thanks{Friedrich Schiller University Jena, Department of Mathematics, Ernst-Abbe-Platz 2, 07743 Jena, Germany\\E-Mail: \texttt{benjamin.hinrichs@uni-jena.de}}}
\newcommand{\E}{{\mathds{E}}}
\newcommand{\Ht}{\widetilde{H}}
\newcommand{\Od}{{\Omega_\downarrow}}
\newcommand{\chr}[1]{\mathbf{1}_{#1}}
\newcommand{\pln}{\psi_n^{(\lambda,\mu)}}
\def\endstepsymbol{}
\begin{document}
	
	\maketitle\thispagestyle{empty}\vspace*{-2em}
	\begin{abstract}\noindent
		We review recent results on the existence of ground states for the infrared-critical spin boson model, which describes the interaction of a massless bosonic field with a two-state quantum system. Explicitly, we derive a critical coupling $\lambda_\sfc>0$ such that the spin boson model exhibits a ground state for coupling constants $\lambda$ with $|\lambda|<\lambda_\sfc$. The proof combines a Feynman--Kac--Nelson formula for the spin boson model with external magnetic field, a 1D-Ising model correlation bound and a compactness argument in Fock space. Elaborating on the connection to a long-range 1D-Ising model, we briefly discuss the conjecture that the spin boson model does not have a ground state at large coupling. This note is based on joint work with David Hasler and Oliver Siebert.
	\end{abstract}

\section{Introduction}

In models describing the interaction of a quantum mechanical particle with a quantum field of massless bosons, one encounters an infrared problem. Intuitively, this can be explained by the circumstance that a finite energy fluctuation might lead to the creation of infinitely many low-energy (sometimes called `soft') bosons. Mathematically speaking, such an infrared catastrophe is reflected in the fact that the Hamilton operator describing the system does not exhibit a ground state, i.e., a stable state at lowest energy. However, it has been observed that underlying symmetries of quantum systems can lead to a mutual cancellation of infrared divergences. Prominently, this has been observed for the model of non-relativistic quantum electrodynamics, see for example \cite{GriesemerLiebLoss.2001,BachChenFrohlichSigal.2007,HaslerSiebert.2020} and references therein.

The spin boson model is also a model which exhibits an underlying symmetry. It describes a two-state quantum mechanical system linearly coupled to a field of bosons. Previously, in \cite{HaslerHerbst.2010,BachBallesterosKoenenbergMenrath.2017}, it has been shown that the spin boson model exhibits a ground state for small coupling constants, by perturbative methods. In the articles \cite{HaslerHinrichsSiebert.2021a,HaslerHinrichsSiebert.2021b,HaslerHinrichsSiebert.2021c}, we presented the first non-perturbative proof for existence of ground states below an explicitly derived critical coupling constant. In this note, these articles are reviewed. Additionally, we enhance the compactness argument from \cite{HaslerHinrichsSiebert.2021a} based on arguments presented in \cite{Matte.2016,HiroshimaMatte.2019}, which allows for a more general choice of dispersion relations.

To emphasize the role of the so-called spin flip symmetry in the cancellation of infrared divergences, we discuss the spin boson model in presence of an external magnetic field. Whereas the external magnetic field has no influence on the existence of ground states in the infrared-regular case, it is crucial that the magnetic field vanishes for a ground state to exist if the model is infrared-critical.

\subsection*{Acknowledgements}

I want to thank David Hasler and Oliver Siebert for the fruitful collaboration, which this work is based on. Further, I am grateful to Fumio Hiroshima for inviting me as a speaker at the RIMS conference. I am looking forward to a (hopefully soon) face-to-face version of this meeting. I also want to thank Oliver Matte for pointing me towards the Fr\'echet--Kolmogorov theorem providing a significant simplification of the compactness argument in \cite{HaslerHinrichsSiebert.2021a}.

\section{Notation, Model and Main Result}
Let us start a more thorough discussion of our results with the following simple definition.
\begin{defn}
	Given a selfadjoint lower-semibounded  Hilbert space operator $H$, we say $H$ has a {\em ground state} if $\inf \sigma(H)$ is an eigenvalue. We say $H$ has a {\em unique} ground state if the corresponding eigenspace is one-dimensional.
\end{defn}
Throughout this note, we fix the following three objects describing the spin boson model.
\begin{description}
	\item{\em Dimension:} $d\in\IN$
	\item{\em Dispersion Relation:} selfadjoint multiplication operator $\omega:\IR^d\to[0,\infty)$ acting on $L^2(\IR^d)$ and satisfying $\omega>0$ almost everywhere
	\item{\em Form Factor:} $v\in \cD(\omega^{-1/2})$, where $\cD(\cdot)$ denotes the domain of a Hilbert space operator
\end{description}
We call the spin boson model {\em infrared-regular} if $v\in\cD(\omega^{-1})$. Otherwise, we call it {\em infrared-critical}.

Given $M\subset \IR^d$, we will also use the following notation:
\begin{description}
	\item{$\chr M$ } denotes the characteristic function of $M$.
	\item{$\cC^\infty_M$ } denotes the set of all smooth functions $\varrho:\IR^d\to[0,1]$ satisfying $\supp \varrho \subset M^\sfc$.
\end{description}
Finally, given $x\in\IR^n$ for some $n\in\IN$, we define the translation operator $\tau_x$ acting on $f\in L^2 (\IR^n)$ as $\tau_x f(p)=f(p+x)$.

The definition of the spin boson model requires the usual bosonic Fock space $\FS$ over $L^2(\IR^d)$, given by
\[ \FS = \bigoplus_{n=0}^\infty\FS^{(n)} \qquad \mbox{with}\ \FS^{(0)}=\IC,\ \FS^{(k)}=L^2_{\sfs\sfy\sfm}(\IR^{d\cdot k})\ \mbox{for}\ k\in\IN,  \]
where we symmetrize over the $k$ $d$-dimensional variables.
Further, for a selfadjoint multiplication operator $m:\IR^d\to\IR$, we define its second quantization as
\[ \dG(\omega) = \bigoplus_{n=0}^\infty \dG^{(n)}(\omega), \qquad\mbox{with}\ \dG^{(0)}(\omega)=0\ \mbox{and}\ \dG^{(n)}(\omega)(k_1,\ldots,k_n) = {\sum_{\ell=1}^{n}\omega(k_\ell)}\ \mbox{for}\ n\in\IN.  \]
Given $f\in L^2(\IR^d)$, we define the associated annihilation operator as
\[ (a(f)\psi)^{(n)}(k_1,\ldots,k_n) = \sqrt{n+1}\int \bar{f(k)}\psi^{(n+1)}(k,k_1,\ldots,k_n)\d k \]
and the corresponding field operator as the selfadjoint operator given by
\[ \ph(f)  = \bar{a(f)+a(f)^*},\]
where $\bar{\phantom{(} \cdot\phantom{)} }$ as usually denotes the operator closure.

The spin boson Hamiltonian now acts on the tensor product Hilbert space
\begin{equation}\label{def:HS}
	\HS = \IC^2 \otimes \FS \cong \FS\oplus \FS,
\end{equation}
where the unitary equivalence is implemented by $(\alpha_1,\alpha_2)\otimes \psi \mapsto \alpha_1\psi \oplus \alpha_2 \psi$.
The Hamilton operator itself is defined as
\begin{equation}\label{def:SB}
	H(\lambda,\mu) = \sigma_z\otimes \Id + \dG(\omega) \otimes \Id + \sigma_x \otimes (\lambda \ph(v) + \mu \Id),
\end{equation}
where $\sigma_x=\begin{psmallmatrix}0&1\\1&0\end{psmallmatrix}$ and $\sigma_z=\begin{psmallmatrix}1&0\\0&-1\end{psmallmatrix}$ denote the usual Pauli matrices. Here, the constant $\lambda\in\IR$ is the coupling of spin and field, whereas $\mu\in\IR$ is the strength of the external magnetic field. Using standard estimates and the Kato--Rellich theorem, it follows that $H(\lambda,\mu)$ is a selfadjoint operator with domain $\cD(\Id\otimes\dG(\omega))$ for all values of $\lambda,\mu\in\IR$, see \cite[Lemma 2.2]{HaslerHinrichsSiebert.2021a} for details. Note that the assumption $v\in \cD(\omega^{-1/2})$ is crucial in this perturbative argument, since it implies that $\ph(v)$ is infinitesimally bounded w.r.t. $\dG(\omega)$.

For our main theorem, we will need the following assumption on the dispersion relation.
\begin{hyp}\label{mainhyp}\label{mainhyp:omega}
	There exists a nowhere dense set $M$ such that, for all $\varrho\in \cC^\infty_M$ and $p\in\IR^d$ with $|p|$ sufficiently small,
	\begin{equation}\label{eq:w-limit}
		\Delta_\varrho(p) = \esssup_{k\in\IR^d} \varrho(k)\frac{\omega(k+p)-\omega(k)}{\omega(k)} < \infty  \qquad \mbox{and}\qquad \Delta_\varrho(p) \xrightarrow{|p|\to0} 0.
	\end{equation}
\end{hyp}
\begin{rem}
	We note that \cref{eq:w-limit} is especially satisfied if $\omega$ is differentiable with bounded derivative and strictly positive outside of any nowhere dense set, by the mean value theorem.
\end{rem}
Our main \lcnamecref{mainthm} summarizes all positive results for existence of ground states in the spin boson model with external magnetic field.
\begin{thm}\label{mainthm}
	Assume \cref{mainhyp} holds.
	\begin{enumthm}
		\item\label{mainthm:infregular} {\em (Infrared-Regular Case)} If $\omega^{-1}v\in L^2 (\IR^d)$ and $\|\omega^{-1}(\tau_pv-v)\|_2\xrightarrow{|p|\to 0}0$, then $H(\lambda,\mu)$ has a ground state for all values of $\lambda,\mu\in\IR$.
		\item\label{mainthm:existence} {\em (Infrared-Critical Case)} Assume $\omega(k)=\omega(-k)$ and $v(k)=\bar{v(-k)}$ for almost all $k\in\IR^d$. Further, assume that $\|\omega^{-1/2}(\tau_pv-v)\|_2 \xrightarrow{|p|\to 0} 0$. If $|\lambda|<\|\omega^{-1/2}v\|_2^{-1}/\sqrt 5$, then $H(\lambda,0)$ has a unique ground state.
	\end{enumthm}
\end{thm}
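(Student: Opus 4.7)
The strategy is to approximate $H(\lambda,\mu)$ by infrared-regularized Hamiltonians and extract a convergent subsequence of ground states in Fock space. For part (i), I would use a massive approximation $\omega \rightsquigarrow \omega + m$ ($m > 0$), obtain ground states $\psi_m$ of the resulting massive spin boson Hamiltonian by standard HVZ/compactness arguments (since the field is now massive), and take $m \to 0$; the hypothesis $\omega^{-1}v \in L^2$ makes this limit tractable. For part (ii), with $\mu = 0$ imposed, I would instead first cut off the infrared singularity by setting $v_\sigma = \chr{\{\omega \ge \sigma\}}v$, so that $\omega^{-1}v_\sigma \in L^2$, whence part (i) applied to the Hamiltonian $H_\sigma(\lambda,0)$ obtained from \eqref{def:SB} with $v$ replaced by $v_\sigma$ yields normalized ground states $\psi_\sigma$ with energies $E_\sigma \to E := \inf\sigma(H(\lambda,0))$.

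In both limits I would establish relative compactness of the cutoff ground states in $\HS$ via the Fréchet--Kolmogorov theorem, applied sector by sector and then summed over the boson number. The uniform $L^2$-norm bound on each $n$-boson sector, tightness at infinity, and the translation equicontinuity $\|\tau_p\psi_\sigma^{(n)} - \psi_\sigma^{(n)}\|_2 \to 0$ uniformly in $\sigma$ all follow from iterations of the pull-through identity
\[
(H_\sigma - E_\sigma + \omega(k))\,a(k)\psi_\sigma \;=\; -\lambda\,v_\sigma(k)\,(\sigma_x \otimes \Id)\psi_\sigma,
\]
together with the translation-continuity hypothesis $\|\omega^{-1/2}(\tau_p v - v)\|_2 \to 0$ (resp.\ its $\omega^{-1}$ analogue in part (i)), where \cref{mainhyp} controls the relative variation of $\omega$ in the energy denominators after a momentum translation. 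Exponential decay of $\|\psi_\sigma^{(n)}\|$ in $n$, again via pull-through, gives summability across sectors. Extracting a strongly convergent subsequence $\psi_{\sigma_j}\to\psi$ and passing to the limit in the eigenvalue equation yields $H(\lambda,0)\psi = E\psi$.

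The main obstacle is to show $\psi \ne 0$. In part (i), the bound $\|a(k)\psi_m\| \le \lambda|v(k)|/\omega(k)$ has an $L^2$ right-hand side, which by iteration gives uniform smallness of $\|\psi_m - P_\Omega\psi_m\|$ and hence $\|\psi\| = 1$. In the infrared-critical part (ii) this route is blocked because $\omega^{-1}v\notin L^2$, and I would instead use a Feynman--Kac--Nelson formula to rewrite the Fock-vacuum matrix element $\langle e_i \otimes \Omega,\, e^{-tH_\sigma(\lambda,0)}\, e_j \otimes \Omega\rangle$ as the partition function of a long-range one-dimensional Ising-like model on $[0,t]$ with pair interaction $\lambda^2 W_\sigma(|s-r|)$, where
\[
W_\sigma(t) \;=\; \int \frac{|v_\sigma(k)|^2}{2\omega(k)}\,e^{-t\omega(k)}\,\d k,
\]
and, crucially, with \emph{vanishing external field}---precisely thanks to the spin-flip symmetry available when $\mu=0$, combined with $v(k)=\overline{v(-k)}$ and $\omega(k)=\omega(-k)$.

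The final input is a uniform-in-$\sigma$ lower bound on the two-spin correlation of this Ising model, which I expect to hold under the smallness assumption $5\lambda^2\|\omega^{-1/2}v\|_2^2<1$ by a reflection-positivity / Griffiths-type argument, with the combinatorial constant $5$ arising from the multilinear expansion underlying the correlation inequality. Translated back through the FKN representation, this produces a uniform lower bound $\|P_\Omega\psi_\sigma\|\ge c>0$ on the Fock-vacuum component of $\psi_\sigma$, which survives the strong limit and forces $\psi\ne 0$. Uniqueness then follows from strict positivity of the FKN path measure, which forces the eigenspace at $E$ to be one-dimensional. The hardest step is clearly the Ising correlation lower bound, whose numerical constant $\sqrt{5}$ is directly responsible for the explicit critical coupling $\|\omega^{-1/2}v\|_2^{-1}/\sqrt{5}$ in the statement.
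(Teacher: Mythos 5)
Your treatment of \subcref{mainthm:infregular} matches the paper's route: approximate by massive Hamiltonians, use pull-through, and get relative compactness via Fr\'echet--Kolmogorov with the $L^2$ majorant $|\lambda||v|/\omega$ coming from the trivial resolvent bound. For \subcref{mainthm:existence}, however, there is a genuine gap in your compactness step, and the role you assign to the Ising model is not the one that closes it. In the infrared-critical case the pull-through identity combined with $\|(H_\sigma-E_\sigma+\omega(k))^{-1}\|\le\omega(k)^{-1}$ only gives $\|a_k\psi_\sigma\|\le|\lambda||v_\sigma(k)|\,\omega(k)^{-1}$, and since $\omega^{-1}v\notin L^2$ this majorant is \emph{not} square-integrable uniformly in the cutoff; the hypothesis $\|\omega^{-1/2}(\tau_pv-v)\|_2\to0$ does not repair this. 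So the Fr\'echet--Kolmogorov argument as you describe it does not go through, and the ``main obstacle'' is not $\psi\ne0$ --- which is automatic once you genuinely have \emph{strong} convergence of normalized vectors --- but precisely the uniform $L^2$ bound needed for relative compactness. The missing idea is the improved resolvent estimate
\[
\|(H-E+\omega(k))^{-1}(\sigma_x\otimes\Id)\psi\|\le\bigl(-\partial_\mu^2E(\lambda,\mu)\big|_{\mu=0}\bigr)^{1/2}\omega(k)^{-1/2},
\]
which exploits the spin-flip symmetry at $\mu=0$ and reduces everything to bounding the second derivative of the ground state energy with respect to the external field, uniformly in the approximation parameter.

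That is where the FKN/Ising input enters, and it enters as an \emph{upper} bound, not a lower bound: by the differentiated Bloch formula, $-\partial_\mu^2E|_{\mu=0}$ is the limit of $\tfrac1T\llangle(\int_0^TX_t\,\d t)^2\rrangle$, i.e.\ the susceptibility of the long-range Ising model, and the key lemma bounds this above by a constant whenever $\|W\|_1<\eps<\tfrac15$; since $\|W\|_1$ is proportional to $\lambda^2\|\omega^{-1/2}v\|_2^2$, this is exactly the origin of the critical coupling $\|\omega^{-1/2}v\|_2^{-1}/\sqrt5$. Your proposed Griffiths/reflection-positivity \emph{lower} bound on two-spin correlations, aimed at a vacuum-overlap bound $\|P_\Omega\psi_\sigma\|\ge c$, addresses a problem that does not arise in this scheme and cannot produce the smallness condition on $\lambda$ (ferromagnetic correlation lower bounds hold at all couplings), nor does it supply the susceptibility upper bound the compactness argument needs. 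Two secondary points: the paper approximates by \emph{massive} dispersion relations rather than infrared-cutoff form factors, because the spectral gap is what justifies analyticity of $E(\lambda,\mu)$ in $\mu$ and the term-by-term differentiation of Bloch's formula --- with your cutoff $v_\sigma$ but massless $\omega$ these perturbative steps are unavailable; and uniqueness is indeed obtained by positivity arguments, as you say.
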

\begin{rem}
	The existence of ground states for the infrared-regular case is a standard result, see for example \cite{Spohn.1989,BachFroehlichSigal.1998b,Gerard.2000}. 
	Since our compactness argument in \cref{sec:compact} provides a simple proof of this case under very general assumptions, we treat it here nevertheless.
\end{rem}
\begin{rem}\label{rem:absence}
	In the infrared-critical case $v\in\cD(\omega^{-1/2})\setminus \cD(\omega^{-1})$, a ground state $\psi$ can only exist if $\braket{\psi,(\sigma_x\otimes\Id)\psi}=0$, see for example \cite{AraiHirokawaHiroshima.1999}. Hence, the only situation in which a ground state might exist due to cancellations of divergences is the case $\mu=0$, which we treat in \subcref{mainthm:existence} of above \lcnamecref{mainthm}. For an extended discussions of the absence of ground states due to infrared divergences, we refer to \cite{Spohn.1998,LorincziMinlosSpohn.2002,Hinrichs.2022}. We emphasize that the expectation $\braket{\psi,(\sigma_x\otimes\Id)\psi}$ vanishes in the case $\mu=0$, due to the underlying spin-flip symmetry in this case, i.e., that $H(\lambda,0)$ commutes with $\sigma_x\otimes(-1)^{\dG(1)}$. This plays a crucial role in the proofs of \cref{thm:resder,cor:dercor} below.
\end{rem}
\begin{ex}[Physical Example] Let us discuss the typical physical example in $d=3$ dimensions. The massless dispersion relation is given by $\omega(k)=|k|$, whereas the form factor is $v(k)=\kappa(k)|k|^{-\alpha}$ with $\kappa:\IR^d\to[0,1]$ being a suitable cutoff function ensuring $v\in L^2(\IR^d)$, such as the characteristic function of a ball around zero or $\kappa(k)=e^{-ck^2}$ for some $c>0$. Given this choice, the model is infrared-regular for $\alpha\in(0,\frac 12)$. Our result on the infrared-critical case holds for any choice of $\alpha\in[\frac 12,1)$, where the physically most interesting case is given by $\alpha=\frac 12$. We note that for $\alpha\ge 1$, the assumption $v\in\cD(\omega^{-1/2})$ is not satisfied and hence the interaction is not bounded w.r.t. the free operator anymore.
\end{ex}

For the remainder of this note, we discuss the proof of \cref{mainthm}. 
In \cref{sec:compact}, we present the compactness argument which is essential for both parts of the statement and based on \cite{HaslerHinrichsSiebert.2021a,HiroshimaMatte.2019}. The main result therein is stated in \cref{thm:resolventbound}. In \cref{sec:resolventbound}, we then review the results from \cite{HaslerHinrichsSiebert.2021b,HaslerHinrichsSiebert.2021c} and show how to verify the assumption of \cref{thm:resolventbound} in the infrared-critical case for coupling constants smaller than the critical value.
In \cref{sec:outlook}, we summarize the proof of \cref{mainthm} and conjecture the behavior of the spin boson model at large coupling.

Since many details of our proofs are deferred to the articles \cite{HaslerHinrichsSiebert.2021a,HaslerHinrichsSiebert.2021b,HaslerHinrichsSiebert.2021c}, we point the reader to \cite{Hinrichs.2022} for a detailed and extensive proof of \cref{mainthm:existence}.

\section{The Compactness Argument}\label{sec:compact}

In this section, we present the compactness argument which is essential to our proof for existence of ground states. It can essentially be found in \cite{HaslerHinrichsSiebert.2021a} and  is based on \cite{GriesemerLiebLoss.2001}. However, similar to \cite{Matte.2016,HiroshimaMatte.2019}, we replace the Rellich criterion by the Fr\'echet--Kolmogorov theorem, which allows for a simpler proof and more general choices of the dispersion relation $\omega$.

The main idea is to approximate the massless model by the massive one, i.e., by introducing an artificial mass of the bosonic field. Then it is a well-known result that a ground state exists, due to the presence of a spectral gap.
\begin{thm}\label{thm:massive}
	If $\essinf_{k\in\IR^d} \omega(k) >0$, then $H(\lambda,\mu)$ has a unique ground state for all $\lambda,\mu\in\IR$.
\end{thm}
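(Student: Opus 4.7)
My plan is to handle existence and uniqueness separately, both exploiting the spectral gap $m:=\essinf_{k\in\IR^d}\omega(k)>0$.

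\emph{Existence via HVZ.} I would first establish the HVZ-type inclusion
\[
\spess(H(\lambda,\mu)) \subset [E(\lambda,\mu)+m,\infty), \qquad E(\lambda,\mu):=\inf\sigma(H(\lambda,\mu)).
\]
Since $H(\lambda,\mu)$ is self-adjoint and bounded below, $E(\lambda,\mu)\in\sigma(H(\lambda,\mu))$, and the above inclusion then forces $E(\lambda,\mu)$ to be an isolated eigenvalue of finite multiplicity. To prove the inclusion I would use a smooth Fock-space partition of unity $J_0^2+J_\infty^2=\Id$ in the spirit of Dereziński--G\'erard, separating configurations with all bosons ``trapped'' from those with at least one ``escaping'' boson. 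Combining the estimate $\dG(\omega)J_\infty\geq m J_\infty$ with an IMS-type localization identity forces any Weyl sequence at energy below $E(\lambda,\mu)+m$ to concentrate in the range of $J_0$, where a Fréchet--Kolmogorov compactness argument (analogous to the one developed later in \cref{sec:compact} of this paper) contradicts weak convergence to zero. The form bound $\|\ph(v)\psi\|\leq c\|\dG(\omega)^{1/2}\psi\| + c'\|\psi\|$ available from $v\in\cD(\omega^{-1/2})$ handles the commutators of $\ph(v)$ with the localization operators.

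\emph{Uniqueness via Perron--Frobenius.} In the Schrödinger (Q-space) representation $\FS\cong L^2(Q,\rmd\mu_0)$ given by the Segal isomorphism, $\dG(\omega)$ generates a positivity-improving semigroup and $\ph(v)$ is a multiplication operator. Conjugating $H(\lambda,\mu)$ by the spin rotation $\e^{\i\pi\sigma_y/4}$ yields the unitarily equivalent Hamiltonian $-\sigma_x\otimes\Id + \dG(\omega)\otimes\Id + \sigma_z\otimes(\lambda\ph(v)+\mu)$, whose off-diagonal part $-\sigma_x\otimes\Id$ exponentiates to a matrix with strictly positive entries in the spin basis, while each diagonal block $\dG(\omega)\pm(\lambda\ph(v)+\mu)$ has a positivity-improving Feynman--Kac semigroup on $L^2(Q)$. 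A Trotter-product argument then shows that $\e^{-tH(\lambda,\mu)}$ is positivity improving on $L^2(\{+,-\}\times Q)$, and Faris's Perron--Frobenius theorem delivers a unique, strictly positive ground state.

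\emph{Main obstacle.} The crux of the argument is the HVZ inclusion: engineering a Fock-space partition of unity whose commutators with the unbounded interaction $\sigma_x\otimes\ph(v)$ can be controlled uniformly near the ground-state energy. Once this is in place, both existence and uniqueness follow from standard tools; the Perron--Frobenius step is a routine application of Faris's criterion in the representation chosen above, and the verification that the diagonal Feynman--Kac semigroups are positivity improving reduces to the well-known property of the Ornstein--Uhlenbeck semigroup on $L^2(Q,\rmd\mu_0)$.
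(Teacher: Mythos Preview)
The paper does not actually prove \cref{thm:massive}; its ``proof'' consists entirely of citations to \cite{AraiHirokawa.1995,DamMoller.2018a} for $\mu=0$ and to \cite[Appendix~D]{HaslerHinrichsSiebert.2021c} for general $\mu$. Your two-step plan (HVZ-type gap for existence, Perron--Frobenius in $Q$-space for uniqueness) is exactly the standard route taken in that literature, so in substance you are aligned with what the paper defers to.

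Two minor remarks on your sketch. First, in the spin boson model there is no matter position to localize around (the spin space is $\IC^2$), so the partition $J_0^2+J_\infty^2=\Id$ is implemented in the one-boson momentum variable or, equivalently, via the extended-Fock-space factorization $U:\FS\to\FS\otimes\FS$ of Derezi\'nski--G\'erard; the compactness you need there is not the statement proved in \cref{sec:compact} (which concerns a \emph{family} of ground states for varying dispersions) but rather relative compactness of certain finite-particle, finite-momentum projections---the analytic tool (Fr\'echet--Kolmogorov) is the same, but the statement is different. Second, in the uniqueness step the Gaussian multiplication operator $\ph(v)$ is unbounded in $Q$-space, so the Trotter product you invoke requires the hypercontractive estimates of Nelson/Segal rather than the naive bounded-perturbation version; also check your sign convention on the spin rotation so that the off-diagonal block really is $-\Id$ (otherwise $e^{-tH}$ is not positivity preserving in the standard cone). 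None of this is a gap---these are all routine refinements of a correct outline.
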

\begin{proof}
	Proofs for this statement in the case $\mu=0$ can, for example, be found in \cite{AraiHirokawa.1995,DamMoller.2018a}. An explicit proof including the external magnetic field is given in \cite[Appendix D]{HaslerHinrichsSiebert.2021c}.
\end{proof}
The value $m_\omega= \essinf_{k\in\IR^d} \omega(k)$ can be interpreted as a boson mass. We are interested in taking $m_\omega\to 0$. To ensure strong convergence of the ground states in this limit, we need to prove that they form a relatively compact set. Therefore, we will employ the following compactness criterion.
%
%
%
A similar (or in fact stronger) statement can be found in \cite[Proposition 3.8]{HiroshimaMatte.2019}. Nevertheless, for the convenience of the reader, we give the nice proof here, which is based on the Fr\'echet--Kolmogorov theorem.

For the use therein and later on, we define the pointwise annihilator $a_k$. Given $\psi^{(n+1)}\in\FS^{(n+1)}$ for some $n\in\IN_0$, the map $k\mapsto \psi^{(n+1)}(k,\cdots)$ is an element of $L^2(\IR^d;\FS^{(n)})$, by the Fubini--Tonelli theorem. Hence, the prescription
\begin{equation}\label{eq:pointann}
	a_k \psi^{(n+1)} (k_1,\ldots,k_n)  = \sqrt{n+1}\psi^{(n+1)}(k,k_1,\ldots,k_n)
\end{equation}
yields a well-defined element $a_k\psi^{(n+1)}\in \FS^{(n)}$ for almost all $k\in\IR^d$.
The definition directly implies
\begin{equation}\label{eq:annnorm}
	\|\psi^{(n)}\|^2_{\FS^{(n)}} = \frac{1}{n}\int \|a_k\psi^{(n)}\|^2_{\FS^{(n-1)}}\d k
\end{equation}
 We remark that, for $\psi\in\FS$, the vector $a_k\psi = (a_k\psi^{(n+1)})_{n\in\IN_0}$ is not necessarily an element of $\FS$.

The following statement can be found in the standard literature and will be employed in the following proofs: If $A:\IR^d\to[0,\infty)$ is measurable and $\psi\in\cD(\dG(A)^{1/2})$, then $a_k\psi\in \FS$ for almost every $k\in\IR^d$ and
\begin{equation}\label{eq:operatordecomp}
	\|\dG(A)^{1/2}\psi\|^2 = \int A(k)\|a_k\psi\|^2 \d k.
\end{equation}
\begin{thm}\label{thm:compactness}
	Let $\cI\subset \FS$ be bounded and let $M\subset \IR^d$ be nowhere dense. Assume that there exists an $f\in L^2 (\IR^d)$ such that
	\begin{equation}
		\|a_{k}\psi\| \le |f(k)| \qquad  \mbox{for all}\ \psi\in\cI\ \mbox{and}\ \mbox{almost every}\ k\in \IR^d. \label{ass:prop.compact-number}
	\end{equation}
	Further, assume that for all $\varrho\in\cC^\infty_M$ there exists $(g_p^{(\varrho)})_{p\in \IR^d}\subset L^2(\IR^d)$ with $\|g^{(\varrho)}_p\|_2\xrightarrow{|p|\to 0} 0$ such that
	\begin{equation}
		 \|\varrho(k+p)a_{k+p}\psi-\varrho(k)a_k\psi\|\le |g^{(\varrho)}_p(k)| \qquad\mbox{for all}\ \psi\in\cI\ \mbox{and}\ \mbox{almost every}\ k\in \IR^d.\label{ass:prop.shift}
	\end{equation}
	Then $\cI$ is relatively compact.
\end{thm}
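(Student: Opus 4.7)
The plan is to apply the Fr\'echet--Kolmogorov theorem sector-by-sector, after first using~\eqref{ass:prop.compact-number} to truncate the boson number. Integrating the bound $\|a_k\psi\|\le|f(k)|$ and summing~\eqref{eq:annnorm} across sectors yields
\[
\sum_{n=0}^\infty n\,\|\psi^{(n)}\|^2 \;=\; \int\|a_k\psi\|^2\,\d k \;\le\; \|f\|_2^2
\]
uniformly for $\psi\in\cI$. Hence the tail beyond the first $N$ sectors has norm at most $\|f\|_2/\sqrt{N+1}$, uniformly in $\cI$, and it suffices to show that each $n$-particle image $\cI^{(n)} := \{\psi^{(n)}:\psi\in\cI\}\subset\FS^{(n)}\subset L^2(\IR^{nd})$ is relatively compact.

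For fixed $n\ge 1$, I want to verify the three hypotheses of Fr\'echet--Kolmogorov for $\cI^{(n)}$: $L^2$-boundedness (immediate from the previous display), tightness, and equi-translation-continuity. Tightness follows from the same display: by symmetry of $\psi^{(n)}$ and $\|a_k\psi\|\le|f(k)|$,
\[
\int_{\max_j|k_j|>R}|\psi^{(n)}|^2\,\d k_1\cdots\d k_n \;\le\; \sum_{j=1}^n\int_{|k_j|>R}|\psi^{(n)}|^2\,\d k_1\cdots\d k_n \;=\; \int_{|k|>R}|f(k)|^2\,\d k,
\]
which vanishes uniformly as $R\to\infty$.

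Equi-translation-continuity is the heart of the argument. Exploiting the symmetry of $\psi^{(n)}$, it reduces to bounding $\int\|a_{k+p}\psi - a_k\psi\|^2\,\d k$ uniformly in $\psi$ as $|p|\to 0$ in $\IR^d$. Given $\varrho\in\cC^\infty_M$, I will decompose
\[
a_{k+p}\psi - a_k\psi \;=\; \big[\varrho(k+p)a_{k+p}\psi - \varrho(k)a_k\psi\big] \;+\; \big[(1-\varrho(k+p))a_{k+p}\psi - (1-\varrho(k))a_k\psi\big],
\]
apply~\eqref{ass:prop.shift} to the first bracket and the triangle inequality together with~\eqref{ass:prop.compact-number} to the second. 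After squaring and integrating, the resulting bound is
\[
\int\|a_{k+p}\psi-a_k\psi\|^2\,\d k \;\le\; 3\,\|g_p^{(\varrho)}\|_2^2 \;+\; 6\int (1-\varrho(k))^2|f(k)|^2\,\d k,
\]
where the first term vanishes as $|p|\to 0$ for any fixed $\varrho$, while the second depends only on $\varrho$.

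The main obstacle is then to exhibit, for any $\eps>0$, a smooth cutoff $\varrho\in\cC^\infty_M$ for which $\int(1-\varrho)^2|f|^2<\eps$. This is exactly where the nowhere-density of $M$ enters (in the intended applications, $M$ is in fact Lebesgue null): one builds such a $\varrho$ by mollifying the indicator function of a large compact subset of $M^\sfc$ whose complement carries arbitrarily small $|f|^2$-mass, using dominated convergence together with $|f|^2\in L^1(\IR^d)$. With $\varrho$ so chosen, further shrinking $|p|$ drives the displayed bound below $\eps$, yielding the required uniform equi-translation-continuity and completing the Fr\'echet--Kolmogorov verification.
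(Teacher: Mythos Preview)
Your argument is correct and uses the same ingredients as the paper---the pointwise bound \eqref{ass:prop.compact-number} for number truncation and tightness, and \eqref{ass:prop.shift} for equi-translation-continuity---but organizes them differently. The paper first applies Fr\'echet--Kolmogorov to the \emph{modified} vectors $(\Gamma(\varrho)\psi)^{(n)}$ for each fixed $\varrho\in\cC^\infty_M$, so that \eqref{ass:prop.shift} yields the equi-continuity without any remainder; the choice of $\varrho=\varrho_\eps$ (with $\int_{\{\varrho_\eps<1\}}|f|^2<\eps$) is postponed to a separate step, where compactness of $\cI$ is deduced from weak convergence plus convergence of norms. You instead apply Fr\'echet--Kolmogorov directly to $\psi^{(n)}$ and absorb the choice of $\varrho$ into the equi-continuity verification via the splitting $a_{k+p}\psi-a_k\psi=[\varrho\text{-part}]+[(1-\varrho)\text{-part}]$. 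Your route is a bit more streamlined---no $\Gamma(\varrho)$ operators, no weak-limit argument---at the cost of an $\eps$/$\varrho$/$p$ interplay inside the Fr\'echet--Kolmogorov hypothesis itself. Both approaches ultimately hinge on the same point you single out at the end: producing $\varrho\in\cC^\infty_M$ with $\int(1-\varrho)^2|f|^2$ arbitrarily small. As you note parenthetically, this really uses that $M$ is $|f|^2$-null rather than merely nowhere dense (a fat Cantor set would obstruct the construction); the paper makes the identical leap at its corresponding step.

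One cosmetic correction: the ``$=$'' in your tightness display should be ``$\le$'', since you are bounding $\|a_k\psi^{(n)}\|^2\le\|a_k\psi\|^2\le|f(k)|^2$ rather than asserting equality.
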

\begin{proof}	
	Throughout this proof, given a measurable $B:\IR^d\to[0,1]$, we define the contraction operator
	\begin{equation}
		\Gamma(B) = \bigoplus_{n=0}^\infty \Gamma^{(n)}(B) \qquad\mbox{with}\ \Gamma^{(0)}(B)=1,\ \Gamma^{(n)}(B)(k_1,\ldots,k_n)=B(k_1)\cdots B(k_n)\ \mbox{for}\ n\in\IN
	\end{equation}
	acting on $\FS$.
	
	We split the proof into two steps.
	\begin{step}\label{step:compact.1}
		We fix $n\in\IN$ and $\varrho \in \cC^\infty_M$ and
		prove that $\{(\Gamma(\varrho)\psi)^{(n)}:\psi\in \cI\}$ is a relatively compact subset of $\FS^{(n)}$. 
		
		By the Fr\'echet--Kolmogorov theorem, this is equivalent to showing that
		\begin{align}
			&\sup_{\psi\in\cI}\|\chr{\{|\cdot|>R\}}(\Gamma(\varrho)\psi)^{(n)}\|_{\FS^{(n)}} \xrightarrow{R\to \infty}0\label{eq:FK-local}\\
			&\sup_{\psi\in\cI}\|(\tau_p-\Id)(\Gamma(\varrho)\psi)^{(n)}\|_{\FS^{(n)}}\xrightarrow{\IR^{d\cdot n}\ni|p|\to 0} 0.\label{eq:FK-shift}
		\end{align}
		where $\chr{\{|\cdot|>R\}}$ is treated as multiplication operator on $\FS^{(n)}=L^2_{\sfs\sfy\sfm}(\IR^{d\cdot n})$.
		
		For arbitrary $\psi\in\cI$, using \cref{eq:annnorm,ass:prop.compact-number} as well as the permutation symmetry of $\psi^{(n)}$, we find
		\[ \|\chr{\{|\cdot|>R\}}(\Gamma(\varrho)\psi)^{(n)}\|_{\FS^{(n)}}^2 \le \int_{|k|>R/n} \|a_k\Gamma(\varrho)\psi\|^2\d k  \le \|\chr{|\cdot|>R/n}f\|_2^2.  \]
		Taking $R\to 0$, this implies \cref{eq:FK-local}.
		
		By a similar argument, for $p=(p_1,\ldots,p_n)\in\IR^{d\cdot n}$ and $\psi\in\cI$, we find
		\[ \|(\tau_p-\Id)(\Gamma(\varrho)\psi)^{(n)}\|_{\FS^{(n)}}^2 \le \sum_{\ell=1}^{n} \int \|\varrho(k+p_\ell)a_{k+p_\ell}\psi - \varrho(k)a_k\psi\|^2 \d k \le \sum_{\ell=1}^{n} \|g^{(\varrho)}_{p_\ell}\|^2,    \]
		using \cref{ass:prop.shift}.
		Therefore, $\|g^{(\varrho)}_p\|_2\xrightarrow{|p|\to 0}0$ proves \cref{eq:FK-shift}.
		
		This finishes the first step.
	\end{step}
	\begin{step}
		We can now prove the statement.
		
		To that end, let $(\psi_\ell)_{\ell\in\IN}$ be a sequence in $\cI$ and fix some  $\eps>0$. We choose $N_\eps\in\IN$, $M\subset M_\eps\subset \IR^d$ and $\varrho_\eps \in \cC^\infty_M$ such  that $\|f\|_2^2/(N_\eps+1)<\eps$, $\|\chr{M_\eps} f\|^2<\eps$ and $\varrho_\eps = 1$ on $M_\eps^\sfc$, which is possible since $M$ is nowhere dense. W.l.o.g. (otherwise restrict to a subsequence), by the boundedness of $\cI$ and \cref{step:compact.1}, we can assume $(\psi_\ell)$ is weakly convergent to some $\Psi\in \FS$ and that $\slim\limits_{\ell\to\infty}(\Gamma(\varrho_\eps)\psi_\ell)^{(n)} = (\Gamma(\varrho_\eps)\Psi)^{(n)}$ for all $n=0,\ldots,N_\eps$.
		
		Let $P_\eps$ denote the orthogonal projection onto the subspace $\bigoplus\limits_{n=0}^{N_\eps}\FS^{(n)}$ of $\FS$. This implies
		\begin{equation}\label{eq:stronglimit}
			\begin{aligned}
			\|\Psi\|^2  &\ge \|P_\eps \Gamma( \varrho_\eps ) \Psi\|^2 = \lim_{\ell\to \infty} \|P_\eps \Gamma(\varrho_\eps)\psi_\ell\|^2 = \lim_{\ell\to \infty} \braket{\psi_\ell,P_\eps\Gamma(\varrho_\eps^2)\psi_\ell}  \\& \ge  \limsup_{\ell\to \infty}\left(\|\psi_\ell\|^2-\|(1-P_\eps)\psi_\ell\|^2 \right) - \sup_{\phi\in\cI}\braket{\phi,P_\eps(1-\Gamma(\varrho_\eps^2))\phi}.
			\end{aligned}
		\end{equation}
		Now, by \cref{ass:prop.compact-number,eq:annnorm}, we have
		\[ \|(1-P_\eps)\psi_\ell\|^2 = \sum_{n=N_\eps+1}^\infty\frac{1}{n}\int \|a_k\psi_\ell^{(n)}\|^2\d k \le \frac{1}{N_\eps+1}\int \|a_k\psi_\ell\|^2 \d k \le \frac{\|f\|^2_2}{N_\eps+1} < \eps. \]
		Further, looking at the $n$-particle subspace $\FS^{(n)}$, we see
		\[ \Gamma^{(n)}(\varrho_\eps^2)(k_1,\ldots,k_n) = \prod_{j=1}^{n}\varrho_\eps^2(k_j) \ \begin{cases} = 0 & \mbox{if}\ \exists j\in\{2,\ldots,n\}:k_j\in(\supp \varrho_\eps)^\sfc, \\ = 1 &\mbox{if}\ k_1,\ldots,k_n\in M_\eps^\sfc,\\ \in[0,1]& \mbox{else}.\end{cases}  \]
		Hence, $1-\Gamma(\varrho_\eps^2) \le \dG(\chr{M_\eps})$. Using \cref{eq:operatordecomp,ass:prop.compact-number}, this implies
		\[  \braket{\phi,P_\eps(1-\Gamma(\varrho_\eps^2))\phi} \le \int_{M_\eps} \|a_k\phi\|^2 \d k \le \|\chr{M_\eps}f\|^2 < \eps \qquad\mbox{for all}\ \phi\in\cI.\]
		Inserting these observations into \cref{eq:stronglimit}, we find $\|\Psi\|^2 \ge \limsup_{\ell\to \infty} \|\psi_\ell\|^2 - 2 \eps$. Taking $\eps\to 0$, we have $\|\Psi\|\ge \limsup_{\ell \to \infty}\|\psi_\ell\|$. Further, the weak lower-semicontinuity of norms and $\Psi = \wlim_{\ell\to\infty}\psi_\ell$ imply $\|\Psi\|\le \liminf_{\ell\to\infty}\|\psi_\ell\|$. Hence, $\|\Psi\| = \lim_{\ell\to\infty}\|\psi_\ell\|$, which proves $(\psi_\ell)$ in fact converges to $\Psi$ strongly.
		
		This finishes the proof.
	\end{step}
\end{proof}
%
We now want to approximate $H(\lambda,\mu)$ by a sequence of spin boson Hamiltonians with massive bosons. Hence, we fix a sequence $(\omega_n)_{n\in\IN}$ satisfying
\begin{enumhyp}
	\item $\omega_n:\IR^d\to[0,\infty)$ is measurable, positive almost everywhere and $v\in\cD(\omega_n^{-1/2})$ for all $n\in\IN$,
	\item $\omega_n(k+p)-\omega_n(k)\le \omega(k+p)-\omega(k)$ for all $k,p\in\IR^d$, $n\in\IN$,
	\item $(\omega_n)$ is (a.e. pointwise) monotonically decreasing and converges to $\omega$ uniformly,
	\item $\essinf_{k\in\IR^d} \omega_n(k)>0$ for all $n\in\IN$.
\end{enumhyp}
We remark that possible choices for this sequence are $\omega_n = \sqrt{\omega^2 + m_n^2}$ or $\omega_n = \omega + m_n$ for some strictly decreasing zero-sequence $(m_n)\subset (0,\infty)$. The parameter $m_n>0$ can be interpreted as the artificial boson mass described above.

By \cref{thm:massive}, the operators $H_n(\lambda,\mu)$ obtained from \cref{def:SB} by replacing $\omega$ by $\omega_n$ have a unique ground state for each $n\in\IN$ and all values $\lambda,\mu\in\IR$. From now on, let $\pln$ be a normalized ground state of $H_n(\lambda,\mu)$ and denote $E_n(\lambda,\mu)=\inf\sigma(H_n(\lambda,\mu))$.
Using \cref{thm:compactness}, the convergence of $(\pln)$ to the ground state of $H(\lambda,\mu)$ can be deduced if a resolvent bound is satisfied.
\begin{thm}\label{thm:resolventbound} Fix $\lambda,\mu\in\IR$. Further,
	assume \cref{mainhyp:omega} holds. If there exists a measurable $h:\IR^d\to[0,\infty)$ such that $v,\tau_p v\in \cD(h)$ for $p\in\IR^d$ small enough, $\|h(\tau_pv-v)\|_2\xrightarrow{|p|\to 0} 0$ and
	\begin{equation}\label{eq:resbound}
		\|(H_n(\lambda,\mu)-E_n(\lambda,\mu)+\omega_n(k))^{-1}(\sigma_x\otimes \Id) \psi\| \le h(k) \qquad\mbox{for almost all}\ k\in \IR^d,
	\end{equation}
	then $H(\lambda,\mu)$ has a unique ground state.
\end{thm}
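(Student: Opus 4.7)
The plan is to apply the compactness theorem \cref{thm:compactness} to the sequence $(\pln)_{n\in\IN}$ of ground states of the massive approximants, whose existence is guaranteed by \cref{thm:massive}, extract a strong limit point, and verify it is a ground state of $H(\lambda,\mu)$. The crucial algebraic ingredient is the \emph{pull-through formula} obtained from the canonical commutation relations: since $[a_k,\dG(\omega_n)]=\omega_n(k)a_k$ and $[a_k,\sigma_x\otimes\ph(v)]=v(k)(\sigma_x\otimes\Id)$, while $\sigma_z\otimes\Id$ and $\mu(\sigma_x\otimes\Id)$ commute with $a_k$, applying $a_k$ to the eigenvalue equation $H_n(\lambda,\mu)\pln=E_n(\lambda,\mu)\pln$ yields (for almost every $k$)
\begin{equation}\label{eq:pullthrough-plan}
  a_k\pln = -\lambda\, v(k)\,(H_n(\lambda,\mu)-E_n(\lambda,\mu)+\omega_n(k))^{-1}(\sigma_x\otimes\Id)\pln.
\end{equation}
Combined with the hypothesized resolvent bound \cref{eq:resbound}, this immediately gives the pointwise control $\|a_k\pln\|\le|\lambda v(k)|h(k)=:|f(k)|$, and $f\in L^2$ since $v\in\cD(h)$.

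To verify \cref{ass:prop.shift}, I would fix the nowhere dense set $M$ from \cref{mainhyp:omega} and, for $\varrho\in\cC^\infty_M$, write (using \cref{eq:pullthrough-plan} and abbreviating $R_n(k)=(H_n(\lambda,\mu)-E_n(\lambda,\mu)+\omega_n(k))^{-1}$)
\[
  \varrho(k+p)a_{k+p}\pln-\varrho(k)a_k\pln
  = -\lambda\bigl[\varrho(k+p)v(k+p)R_n(k+p)-\varrho(k)v(k)R_n(k)\bigr](\sigma_x\otimes\Id)\pln,
\]
and split the bracket into three telescoping pieces: (a) $\varrho(k+p)v(k+p)[R_n(k+p)-R_n(k)]$, (b) $\varrho(k+p)[v(k+p)-v(k)]R_n(k)$, and (c) $[\varrho(k+p)-\varrho(k)]v(k)R_n(k)$. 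For (a), the first resolvent identity gives $R_n(k+p)-R_n(k)=(\omega_n(k)-\omega_n(k+p))R_n(k+p)R_n(k)$, and using $\|R_n(k+p)\|\le\omega_n(k+p)^{-1}$ together with the assumption $|\omega_n(k+p)-\omega_n(k)|\le|\omega(k+p)-\omega(k)|$ and $\omega_n\ge\omega$ yields a bound by $\varrho(k+p)\frac{|\omega(k+p)-\omega(k)|}{\omega(k+p)}|v(k+p)|h(k)$, whose $L^2$-norm in $k$ tends to $0$ by $\Delta_\varrho(p)\to 0$ from \cref{mainhyp:omega} together with $L^2$-continuity of translations of $v$. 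Piece (b) is bounded by $|\varrho(k+p)|\,|v(k+p)-v(k)|\,h(k)$, whose $L^2$-norm tends to $0$ by the hypothesis $\|h(\tau_pv-v)\|_2\to 0$. Piece (c) uses a Lipschitz estimate on the smooth, compactly perturbed $\varrho$ to give a bound proportional to $|p|\,|v(k)|h(k)$, which is likewise $L^2$-small. Summing defines the required majorant $g_p^{(\varrho)}$ with $\|g_p^{(\varrho)}\|_2\to 0$.

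With both hypotheses of \cref{thm:compactness} established, $(\pln)$ is relatively compact, so after passing to a subsequence $\pln\to\Psi$ strongly, with $\|\Psi\|=1$ in particular $\Psi\neq 0$. To conclude $\Psi$ is a ground state of $H(\lambda,\mu)$, I would show $H_n(\lambda,\mu)\to H(\lambda,\mu)$ in strong resolvent sense (using the uniform convergence $\omega_n\to\omega$ which gives monotone convergence of $\dG(\omega_n)$ to $\dG(\omega)$ on the common form domain, while the interaction terms are identical), and that $E_n(\lambda,\mu)\to E(\lambda,\mu)$ (min-max together with the monotonicity of $\omega_n$). Strong resolvent convergence combined with the strong convergence of eigenvectors at converging eigenvalues then gives $H(\lambda,\mu)\Psi=E(\lambda,\mu)\Psi$. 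Uniqueness descends from the approximants: each $\pln$ is the unique normalized ground state of $H_n(\lambda,\mu)$, and a standard argument shows any other ground state of $H(\lambda,\mu)$ would yield a second limit orthogonal to $\Psi$, contradicting the one-dimensional eigenspace structure at each finite $n$ (alternatively, one can invoke a Perron--Frobenius argument in the Schrödinger representation). The principal technical obstacle is the shift bound in piece (a) above: one must simultaneously control the resolvent difference by the dispersion ratio $\frac{\omega(k+p)-\omega(k)}{\omega(k)}$ through \cref{mainhyp:omega} and absorb the singularity of $v$ at the infrared regime, which is exactly what makes the cutoff $\varrho\in\cC^\infty_M$ indispensable.
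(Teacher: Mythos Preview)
Your overall strategy---pull-through formula, verification of the hypotheses of \cref{thm:compactness}, extraction of a strong limit, identification as a ground state---is exactly the paper's. Two points deserve comment.

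\emph{Passage to the limit.} Your route via strong resolvent convergence is correct and standard; the paper instead argues directly on the quadratic form. From the pull-through bound it gets $\|\dG(\omega)^{1/2}\pln\|^2=\int\omega(k)\|a_k\pln\|^2\,\d k\le\|\omega^{-1/2}v\|_2^2\cdot\text{const}$ uniformly in $n$, hence $\Psi$ lies in the form domain, and then lower semicontinuity of the closed form together with $\omega\le\omega_n$ and $E_n\to E$ gives $\braket{\Psi,(H-E)\Psi}\le\liminf_n\braket{\pln,(H_n-E)\pln}=0$. This avoids having to establish strong resolvent convergence separately, but your argument is equally valid.

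\emph{Uniqueness.} Your primary argument is not sound: the fact that each $H_n$ has a one-dimensional ground-state eigenspace does \emph{not} by itself force the limiting eigenspace to be one-dimensional. A hypothetical second ground state of $H(\lambda,\mu)$ need not arise as a limit of ground states of the $H_n$, so there is nothing to contradict. Your parenthetical alternative (Perron--Frobenius/positivity in the Schr\"odinger representation) is the correct route and is precisely what the paper invokes.

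A minor technical remark on your piece (c): the class $\cC^\infty_M$ only requires smoothness and $\supp\varrho\subset M^\sfc$, so $\varrho$ need not be globally Lipschitz; your bound $|\varrho(k+p)-\varrho(k)|\lesssim|p|$ is not automatic. The paper sidesteps this by grouping (b) and (c) into a single difference $(\varrho v)(k)-(\varrho v)(k+p)$ and choosing $g_p^{(\varrho)}=h(\tau_pv-v)+\Delta_\varrho(p)hv$; alternatively one observes that in the proof of \cref{thm:compactness} only $\varrho$'s with bounded gradient are actually needed.
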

\begin{rem}
	In our proof of \cref{mainthm}, we will (up to a positive constant) choose $h\in\{\omega^{-1},\omega^{-1/2}\}$.
\end{rem}
\begin{proof}
	Throughout this proof, we fix $\lambda,\mu\in\IR$ and write
	\[ R_n(k) =  (H_n(\lambda,\mu)-E_n(\lambda,\mu)+\omega_n(k))^{-1}.\]
	Further, we drop tensor products with the identity.
	
	The connection between the considered resolvents and the assumptions of \cref{thm:compactness} is given by the pull-through formula
	\begin{equation}\label{eq:pt}
		a_k\psi_n^{(\lambda,\mu)} = -v(k)R_n(k)\sigma_x \psi_n^{(\lambda,\mu)} \qquad\mbox{for almost all}\ k\in\IR^d,
	\end{equation}
	where $a_k$ acts on $\pln$ component-wise in the sense of \cref{def:HS}
	For proofs, see for example \cite{BachFroehlichSigal.1998b,Gerard.2000} or \cite[Lemma 6.13]{Hinrichs.2022}.
	
	We now want to apply \cref{thm:compactness} to the bounded set $\cI=\{\psi_n^{(\lambda,\mu)}:n\in\IN\}$, where the nowhere dense set $M$ is chosen as in \cref{mainhyp}.
	By the assumption, \cref{ass:prop.compact-number} follows from \cref{eq:pt} with the choice $f=hv$. Further, by the resolvent identity, we find
	\begin{align*}
		\varrho(k+p)a_{k+p}\pln - \varrho(k)a_k\pln  = &\, \varrho v(k)R_n(k)\sigma_x\pln - \varrho v(k+p)R_n(k+p)\sigma_x\pln\\
		= &\, (\varrho v (k) - \varrho v(k+p))R_n(k)\sigma_x \pln \\ &\, + \varrho v(k+p)R_n(k+p)(\omega_n(k+p)-\omega_n(k))R_n(k)\sigma_x\pln.
	\end{align*}
	Hence, \cref{mainhyp}, the standard bound
	\begin{equation}\label{eq:standard}
		\|R_n(k)\|^{-1}\le \omega_n(k)^{-1} \le \omega(k)^{-1}
	\end{equation}
	and the assumptions on $(\omega_n)$ imply
	\begin{align*}
		\|\varrho(k+p)a_{k+p}\pln &- \varrho(k)a_k\pln\|\\& \le h(k)(v(k)-v(k+p)) + \varrho(k+p)\frac{\omega_n(k)-\omega_n(k+p)}{\omega_n(k+p)} h(k)v(k+p)\\
		& \le h(k)(v(k)-v(k+p)) + \varrho(k+p)\frac{\omega(k)-\omega(k+p)}{\omega(k+p)} h(k)v(k+p)
	\end{align*}
	Therefore, \cref{ass:prop.shift} is satisfied with the choice $g^{(\varrho)}_{p} = h(\tau_pv-v) + \Delta_\varrho(p)hv$, where $\Delta_\varrho(p)$ is defined as in \cref{eq:w-limit}. This proves $\cI$ is relatively compact.

	W.l.o.g., we can now assume that the sequence $(\pln)$ is strongly convergent to some normalized $\psi_0\in \FS$. It remains to prove that $\psi_0$ is a ground state of $H(\lambda,\mu)$. To that end, we first observe that, by \cref{eq:operatordecomp,eq:pt,eq:standard}, we have
	\[ \|\dG(\omega)^{1/2}\pln\|^2 = \int \omega(k)\|a_k\pln\|^2 \le \|v\|_2^2.  \]
	This implies that $\psi_0\in\cD(\dG(\omega)^{1/2})=\cD((H(\lambda,\mu)-E(\lambda,\mu))^{1/2})$. Further, by the lower-semicontinuity of closed quadratic forms and using both the monotonicity and the uniform convergence of $(\omega_n)$, we find 
	\begin{align*}
		\|(H(\lambda,\mu)-E(\lambda,\mu))^{1/2}\psi_0\|^2 &= \braket{\psi_0,(H(\lambda,\mu)-E(\lambda,\mu))\psi_0} \\&\le \liminf_{n\to\infty}\braket{\pln,(H(\lambda,\mu)-E(\lambda,\mu))\pln}\\ & \le \liminf_{n\to\infty}\braket{\pln,(H_n(\lambda,\mu)-E(\lambda,\mu))\pln}\\
		&\le \liminf_{n\to\infty}(E_n(\lambda,\mu)-E(\lambda,\mu)) = 0.
	\end{align*}
	Hence, $(H(\lambda,\mu)-E(\lambda,\mu))^{1/2}\psi_0=0\in \cD((H(\lambda,\mu)-E(\lambda,\mu))^{1/2})$ and therefore $\psi_0\in\cD(H(\lambda,\mu))$ with $H(\lambda,\mu)\psi_0=E(\lambda,\mu)\psi_0$.
	
	The uniqueness of the ground state can be inferred from positivity arguments, see \cite[Proposition 3.6]{HaslerHinrichsSiebert.2021c} for details.
	This finishes the proof.
\end{proof}

\section{Proof of Resolvent Bound}\label{sec:resolventbound}

In this \lcnamecref{sec:resolventbound}, we show how to prove the resolvent bound \cref{eq:resbound} with $h=\omega^{-1/2}$ if $\mu=0$ and $|\lambda|$ is smaller than a critical value. This is essential for the proof of existence of ground states in the infrared critical case, cf. \cref{mainthm:existence}. Proofs for the statements presented in this \lcnamecref{sec:resolventbound} can be found in \cite{HaslerHinrichsSiebert.2021b,HaslerHinrichsSiebert.2021c}. Many of our results only hold for the case of massive bosons, i.e., if
\[ m_\omega := \essinf_{k\in\IR^d} \omega(k)>0. \]
We will emphasize this fact, if this is the case. 
Throughout this section, we write
\[E(\lambda,\mu) = \inf\sigma(H(\lambda,\mu)).\]
The first observation we make is that the desired resolvent bound follows, if a bound on the second derivative of the ground state energy w.r.t. the external magnetic field is satisfied. This statement does not carry over to the case $\mu\ne 0$, since the proof significantly builds on the spin-flip symmetry in the case $\mu=0$. To ensure the differentiability of the ground state energy, we need to assume the Hamiltonian has a spectral gap, which is the case if $m_\omega>0$. The proof uses simple techniques from perturbation theory.

\begin{thm}[{\cite[Lemmas 4.3, 4.4]{HaslerHinrichsSiebert.2021a}}]\label{thm:resder}
 	Let $\lambda\in\IR$ and assume $m_\omega>0$. Further, assume $\psi$ is a normalized ground state of $H(\lambda,0)$, which exists by \cref{thm:massive}.
 	Then $E(\lambda,\mu)$ is analytic in $\mu$ and we have \[ \|(H(\lambda,0)-E(\lambda,0)+\omega(k))^{-1}\sigma_x\psi\| \le (-\partial_\mu^2E(\lambda,\mu)\big|_{\mu=0})^{1/2}  \omega^{-1/2}(k) \qquad\mbox{for all}\ k\in\IR^d. \]
\end{thm}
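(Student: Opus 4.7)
The plan is to combine analytic perturbation theory in $\mu$, the spin-flip symmetry of $H(\lambda,0)$, and an elementary operator inequality relating the reduced resolvent on $\{\psi\}^\perp$ to the shifted resolvent. Since $m_\omega>0$, \cref{thm:massive} yields a unique simple ground state $\psi$ of $H(\lambda,0)$ at energy $E(\lambda,0)$, separated from the rest of the spectrum by a positive spectral gap. Because $\sigma_x\otimes\Id$ is bounded, $\mu\mapsto H(\lambda,\mu)=H(\lambda,0)+\mu(\sigma_x\otimes\Id)$ is a selfadjoint analytic family of type~(A) in Kato's sense, and Kato--Rellich analytic perturbation theory provides analyticity of $E(\lambda,\mu)$ near $\mu=0$ together with an analytic normalized family of ground states $\psi_\mu$ with $\psi_0=\psi$. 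The standard Rayleigh--Schr\"odinger expansion then yields
\begin{equation*}
\partial_\mu E(\lambda,0)=\braket{\psi,\sigma_x\psi},\qquad
-\partial_\mu^2 E(\lambda,0)=2\,\braket{\sigma_x\psi,(H(\lambda,0)-E(\lambda,0))^{-1}P_\perp\sigma_x\psi},
\end{equation*}
where $P_\perp=\Id-\ket{\psi}\bra{\psi}$ and the reduced resolvent is well defined on $\ran P_\perp$ by the gap.

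By \cref{rem:absence}, the expectation $\braket{\psi,\sigma_x\psi}$ vanishes because of the spin-flip symmetry of $H(\lambda,0)$, so $\sigma_x\psi\in\ran P_\perp$ and $P_\perp\sigma_x\psi=\sigma_x\psi$. The second-derivative formula therefore simplifies to $-\partial_\mu^2 E(\lambda,0)=2\,\|(H(\lambda,0)-E(\lambda,0))^{-1/2}\sigma_x\psi\|^2$, where the inverse square root is taken on $\{\psi\}^\perp$ using the gap. For the comparison step, the elementary inequality $(x+y)^2\ge xy$ for $x,y\ge0$ gives $(x+y)^{-2}\le(xy)^{-1}$ whenever $x,y>0$; applied via functional calculus to $H(\lambda,0)-E(\lambda,0)$ on $\ran P_\perp$ with $y=\omega(k)>0$, this yields the operator bound
\begin{equation*}
(H(\lambda,0)-E(\lambda,0)+\omega(k))^{-2}P_\perp\le\omega(k)^{-1}(H(\lambda,0)-E(\lambda,0))^{-1}P_\perp.
\end{equation*}
Evaluating in $\sigma_x\psi\in\ran P_\perp$ and inserting the previous identity gives $\|(H(\lambda,0)-E(\lambda,0)+\omega(k))^{-1}\sigma_x\psi\|^2\le\tfrac12\,\omega(k)^{-1}(-\partial_\mu^2 E(\lambda,0))$, which even improves the claimed bound by a factor $1/\sqrt 2$.

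The only genuinely delicate point is the rigorous implementation of analytic perturbation theory: one must verify that the spectral gap of $H(\lambda,0)$ persists uniformly for small $\mu$, so that Kato's theorems supply both the analyticity of $E(\lambda,\mu)$ and the existence of an analytic ground-state family $\psi_\mu$, and that the Rayleigh--Schr\"odinger formulas for the derivatives are justified. Once the gap is in hand the remainder is essentially algebraic: spin-flip symmetry kills the first-order term and identifies $-\partial_\mu^2 E(\lambda,0)$ with twice the squared norm of $(H-E)^{-1/2}\sigma_x\psi$, and the functional-calculus inequality transfers this control to the shifted resolvent.
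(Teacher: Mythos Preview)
Your argument is correct and follows the standard route one expects here---and, as far as one can compare, it matches the approach of the cited Lemmas~4.3 and~4.4 in \cite{HaslerHinrichsSiebert.2021a}: Kato--Rellich analytic perturbation theory for the type-(A) family $\mu\mapsto H(\lambda,0)+\mu\,\sigma_x\otimes\Id$, the Rayleigh--Schr\"odinger identity for $\partial_\mu^2 E$, the spin-flip symmetry to place $\sigma_x\psi$ into $\{\psi\}^\perp$, and then a functional-calculus comparison of $(H-E+\omega(k))^{-2}$ with $\omega(k)^{-1}(H-E)^{-1}$ on $\{\psi\}^\perp$. Note that the present paper does not reproduce the proof but only cites it, so there is no detailed text to compare against; your write-up is a faithful reconstruction of that argument.

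Two small remarks. First, the theorem asserts analyticity of $E(\lambda,\mu)$ in $\mu$ without restriction; your Kato--Rellich step only gives it in a neighbourhood of $\mu=0$. Since \cref{thm:massive} supplies a unique ground state with positive gap for \emph{every} $\mu\in\IR$ when $m_\omega>0$, the same argument can be repeated around any $\mu_0$, yielding global real-analyticity---you might say this in one line. Second, your observation that the inequality $(x+y)^2\ge xy$ already gives the bound with an extra factor $1/\sqrt2$ is correct; in fact $(x+y)^2\ge4xy$ would give $1/2$. The paper (via the cited lemmas) does not record this sharper constant, presumably because only uniform boundedness of $-\partial_\mu^2E$ is needed downstream.
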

The remainder of this section, we prove that the derivatives $\partial_\mu^2 E(\lambda,\mu)$ are bounded uniformly in $m_\omega$. To that end, we first use the well-known connection between the operator  semigroup and the ground state energy of an operator, also referred to as Bloch's formula, to rewrite these derivatives as expectation values w.r.t. the ground state of the free operator $H(0,0)$. We refrain from giving a proof here, but emphasize that the generalization of Bloch's formula to derivatives of the ground state energy is again only possible due to the spectral gap of $H(\lambda,\mu)$. In the statement, let
\[\Od = \begin{pmatrix}0\\1\end{pmatrix} \otimes (1,0,0,\ldots)\in\IC^2\otimes \FS  \]
denote the ground state of the free operator $H(0,0)$.
\begin{thm}[{\cite[Lemma 2.9, Theorem 2.10]{HaslerHinrichsSiebert.2021c}}]\label{thm:semigroup}Fix $\lambda,\mu\in\IR$.\\
	Then
	\[ E(\lambda,\mu) = \lim_{T\to \infty} -\frac 1T\partial_\mu \ln\Braket{\Od,e^{-TH(\lambda,\mu)}\Od}.  \]
	Further, if $m_\omega>0$, we have
	\[ \partial_\mu^\ell E(\lambda,\mu) = \lim_{T\to \infty} -\frac 1T\partial_\mu^\ell \ln\Braket{\Od,e^{-TH(\lambda,\mu)}\Od} \qquad\mbox{for all}\ \ell\in\IN. \]
\end{thm}
To prove that the expectation on the right hand side is bounded, we now exploit the connection to a continuous one-dimensional Ising model. To make this statement more precise, let $(X_t)_{t>0}$ be a continuous-time Markov process taking values in $\{\pm 1\}$, satisfying $\PP[X_0=1]=\PP[X_0=-1]=\frac 12$ and having Poisson-distributed jump times with parameter 1, i.e.,
\[ \PP[X_t=x|X_s=y] = \frac 12 (1+\delta_{x,y}e^{-2|t-s|}-\delta_{x,-y}e^{-2|t-s|}) \qquad\mbox{for}\ t,s>0,\ x,y\in\{\pm 1\}, \]
where $\delta_{\cdot,\cdot}$ denotes the usual Kronecker delta.
The connection to the operator semigroup is now given by a Feynman--Kac--Nelson (FKN) type formula. Whereas we prove the full FKN formula in \cite{HaslerHinrichsSiebert.2021c}, we here only present the version obtained when treating the expectation w.r.t. $\Od$ and integrating out the field degrees of freedom. We emphasize that this statement also holds in the massless case $m_\omega=0$.
\begin{thm}[{\cite[Corollary 2.6]{HaslerHinrichsSiebert.2021c}}]\label{thm:FKN} Assume $\omega(k)=\omega(-k)$ and $v(k)=\bar{v(-k)}$ holds for almost all $k\in\IR^d$.
	Further, we define
	\[ W(t)= \frac 14 \int_{\IR^d}|v(k)|^2e^{-|t|\omega(k)}\d k.\]
	 Then, for all $\lambda,\mu\in\IR$ and $T>0$, we have
	\[ e^{-T}\Braket{\Od,e^{-TH(\lambda,\mu)}\Od} = \EE\left[\exp\left(\lambda^2 \int_0^T\int_0^T W(t-s)X_tX_s\d t \d s - \mu \int_0^T X_t \d t\right)\right].  \]
\end{thm}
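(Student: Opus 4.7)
The plan is to combine a Dyson series expansion of $e^{-TH(\lambda,\mu)}$ on the operator side with a Wick evaluation of the bosonic vacuum expectation, and then to match term by term with the Taylor expansion of the expectation on the probabilistic side.

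First, I would apply the Dyson series with unperturbed operator $H_0=\sigma_z\otimes\Id+\Id\otimes\dG(\omega)$ and perturbation $V=\sigma_x\otimes(\lambda\ph(v)+\mu\Id)$. Since $\Od$ is an eigenvector of $H_0$ with eigenvalue $-1$ and each $\sigma_x$ exchanges the two spin sectors, only the even orders $n=2m$ contribute to $\braket{\Od,e^{-TH(\lambda,\mu)}\Od}$. The alternating action of $e^{-sH_0}$ in the two sectors then produces an overall spin factor $\exp(T-2\sum_{j=1}^{m}(t_{2j}-t_{2j-1}))$; a short computation from the transition probabilities in the statement identifies this with $e^T\,\EE[X_{t_1}\cdots X_{t_{2m}}]$ for the paper's Markov chain $X$.

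Second, I would evaluate the remaining factor, which is the vacuum expectation of a product of $2m$ copies of $\lambda\ph(v)+\mu$ interlaced with the semigroup factors $e^{-(t_{k+1}-t_k)\dG(\omega)}$. Distributing the product and applying Wick's theorem to the Gaussian vacuum produces a sum over subsets $S\subset\{1,\dots,2m\}$ of even cardinality and over perfect matchings $\pi$ of $S$: the $2m-|S|$ slots outside $S$ each contribute $\mu$, while each pair $\{i,j\}\in\pi$ contributes the two-point function $\int_{\IR^d}|v(k)|^2 e^{-(t_j-t_i)\omega(k)}\,\d k$, which by the symmetries $\omega(-k)=\omega(k)$ and $v(-k)=\bar{v(k)}$ is a positive multiple of $W(t_j-t_i)$.

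Third, I would identify the resulting series with the Taylor expansion of the right-hand side. Multiplying the operator side by $e^{-T}$ absorbs the $e^T$ from the spin factor, and since the Poisson rate-$1$ density of $X$ on $\Delta_{2m}$ is precisely $e^{-T}\,\d t_1\cdots \d t_{2m}$, the ordered Dyson integrals together with the spin moments reassemble into Markov-chain expectations. Expanding the two exponentials on the right-hand side at order $\lambda^{2a}\mu^{2b}$ produces an integral over $2a$ ``pair times'' (endowed with a canonical matching weighted by $W$) and $2b$ ``singleton times''; after symmetrizing the $(2a+2b)!$ orderings of these points one recovers exactly the subset-and-pairing sum from the Wick expansion above, the odd-order terms vanishing on both sides because $\EE[X_{r_1}\cdots X_{r_n}]=0$ for $n$ odd and because $|S|$ is even on the operator side.

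The hardest step is the combinatorial identification in the third paragraph: one must carefully track the ordered Dyson integration and the Wick pairings against the unordered integration and the unique canonical pairing on the probabilistic side, and verify that the normalization constant $\tfrac14$ in the definition of $W$ is consistent with the bosonic two-point function. A further technical task is the termwise interchange of summation, integration and expectation; for the massive case $m_\omega>0$ this is immediate from a norm estimate on $V$, and the case $m_\omega=0$ follows by the same mass approximation used in \cref{sec:compact} together with dominated convergence based on the bound $|X_t|\le 1$ and the absolute convergence of the pairing sums order by order.
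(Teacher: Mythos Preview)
The paper does not prove \cref{thm:FKN} here; it is quoted from \cite{HaslerHinrichsSiebert.2021c}, and the surrounding text indicates that the argument there proceeds by first establishing a \emph{full} Feynman--Kac--Nelson formula (a Trotter-type identity expressing $e^{-TH(\lambda,\mu)}$ as an expectation over both the spin process $X$ and a Gaussian field process) and then taking the vacuum expectation and integrating out the Gaussian field degrees of freedom. Your route---Dyson expansion on the operator side, Wick pairing for the bosonic vacuum, and a direct combinatorial match with the Taylor expansion of the right-hand side---is a genuinely different and in some ways more elementary strategy: it bypasses the Gaussian-process machinery and the $\cQ$-space representation entirely, and your identification of the spin factor with $e^{T}\,\EE[X_{t_1}\cdots X_{t_{2m}}]$ via the telegraph-process moments is correct and clean.

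There is, however, a real gap in your justification of convergence. You write that in the massive case $m_\omega>0$ the termwise interchange ``is immediate from a norm estimate on $V$'', but $V=\sigma_x\otimes(\lambda\ph(v)+\mu\Id)$ is \emph{unbounded} for every nonzero $v$, irrespective of the boson mass: a positive $m_\omega$ produces a spectral gap for $\dG(\omega)$ but does nothing to bound $\ph(v)$. Hence the operator-norm Dyson series does not converge, and your appeal to it fails as stated. The fix is not hard but must be made explicit: one controls the \emph{vacuum matrix element} term by term, using that the $n$-th Dyson integrand applied to $\Omega$ lies in the $\le n$-particle sector and the standard bound $\|a^\dagger(f_1)\cdots a^\dagger(f_n)\Omega\|\le\sqrt{n!}\,\|f_1\|\cdots\|f_n\|$ together with the simplex volume $T^n/n!$; summing the resulting Wick-paired terms then gives an absolutely convergent majorant. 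Alternatively one can avoid the issue altogether by following the Trotter route of the cited reference. Either way, the massive/massless distinction is irrelevant at this step, so your proposed reduction to $m_\omega>0$ followed by a mass limit is unnecessary once the series is handled correctly.
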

\begin{rem}
	The integrals on the right hand side are Riemann integrals. If $X$ is realized as a random variable on a probability space $\cP$, then the function $t\mapsto X_t(p)$ has only finitely many discontinuities in the interval $[0,T]$ for almost every $p\in\cP$. Combined with the continuity of $W$, this implies the existence of the integrals. Further, the continuity also implies that the expression on the right hand side is uniformly bounded in $p\in\cP$. Hence, the expectation value exists and is finite, by the dominated convergence theorem.
\end{rem}
\begin{rem}\label{rem:Ising}
	The right hand side can be interpreted as the partition function of an Ising model over $\IR$ with interaction function $W$ and external magnetic field $\mu$. This connection between spin boson and Ising models has for example been used in \cite{EmeryLuther.1974,FannesNachtergaele.1988,Spohn.1989,Abdessalam.2011,HirokawaHiroshimaLorinczi.2014}.
\end{rem}

Combining \cref{thm:FKN,thm:semigroup}, we find that the derivatives of the ground state energy can be calculated as correlation functions of the continuous Ising model described in \cref{rem:Ising}. To simplify notation, given a continuous function $I:\IR\to[0,\infty)$, let us define the partition function
\[ Z^{(I)}_T(\lambda,\mu) =  \EE\left[\exp\left(\lambda^2 \int_0^T\int_0^T I(t-s)X_tX_s\d t \d s - \mu \int_0^T X_t \d t\right)\right]. \]
Further, given a random variable $Y$ defined on the same probability space as $X$, we define its expectation value
\[\llangle Y\rrangle^{(I)}_{T,\lambda,\mu} = \frac{1}{Z^{(I)}_T(\lambda,\mu)}\EE\left[Y\exp\left(\lambda^2 \int_0^T\int_0^T I(t-s)X_tX_s\d t \d s - \mu \int_0^T X_t \d t\right)\right].\]
\begin{cor}\label{cor:dercor}
	Let $W$ be as defined in \cref{thm:FKN}. Further, we assume $m_\omega>0$ and $\omega(k)=\omega(-k)$ as well as $v(k)=\bar{v(-k)}$ for almost all $k\in\IR^d$. Then
	\[ \partial_\mu^2 E(\lambda,\mu)|_{\mu=0} = -\lim_{T\to\infty}\frac 1T\left\llangle \left(\int_0^T X_t\d t\right)^2\right\rrangle^{(W)}_{T,\lambda,0} \qquad\mbox{for all}\ \lambda\in\IR.  \]
\end{cor}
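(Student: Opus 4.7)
The plan is to directly combine \cref{thm:semigroup} and \cref{thm:FKN} and then compute two $\mu$-derivatives of the resulting log-partition function, using the spin-flip symmetry at $\mu=0$ to eliminate the square of the first-moment term.

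First, by \cref{thm:FKN}, for every $T>0$ and $\lambda,\mu\in\IR$,
\[ \ln\Braket{\Od, e^{-TH(\lambda,\mu)}\Od} \;=\; T + \ln Z^{(W)}_T(\lambda,\mu), \]
and the additive $T$ is independent of $\mu$. Since $m_\omega>0$, \cref{thm:semigroup} gives
\[ \partial_\mu^\ell E(\lambda,\mu) \;=\; -\lim_{T\to\infty}\frac1T\,\partial_\mu^\ell \ln Z^{(W)}_T(\lambda,\mu) \qquad\text{for all }\ell\in\IN, \]
so it suffices to compute $\partial_\mu^2 \ln Z^{(W)}_T(\lambda,\mu)\big|_{\mu=0}$.

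Next, I would differentiate under the expectation. Since $|X_t|\equiv 1$, the random variable $\int_0^T X_t\d t$ is bounded by $T$ almost surely, and $W$ is continuous and bounded on $[-T,T]$, so the integrand defining $Z^{(W)}_T(\lambda,\mu)$ is dominated (uniformly in $\mu$ on compact sets) by an integrable random variable; standard dominated-convergence arguments justify taking two $\mu$-derivatives inside the expectation. A short calculation then gives the familiar cumulant identity
\[ \partial_\mu^2 \ln Z^{(W)}_T(\lambda,\mu) \;=\; \left\llangle\Bigl(\int_0^T X_t\,\d t\Bigr)^{\!2}\right\rrangle^{(W)}_{T,\lambda,\mu} - \left\llangle\int_0^T X_t\,\d t\right\rrangle^{(W)2}_{T,\lambda,\mu}. \]

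The final step is to show the first-moment term vanishes at $\mu=0$. Because $\PP[X_0=1]=\PP[X_0=-1]=\tfrac12$ and the transition probabilities in \cref{thm:FKN} depend only on $\delta_{x,y}$ and $\delta_{x,-y}$, the law of $(X_t)_{t\ge 0}$ is invariant under the global spin-flip $X\mapsto -X$. At $\mu=0$ the weight $\exp(\lambda^2\iint W(t-s)X_tX_s\d t\d s)$ is also manifestly invariant under $X\mapsto -X$, so the measure defining $\llangle\cdot\rrangle^{(W)}_{T,\lambda,0}$ is symmetric and the odd observable $\int_0^T X_t\d t$ has expectation zero. Inserting this into the cumulant identity and then into the semigroup formula yields the claimed equality. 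The only mildly delicate point is the differentiation under the expectation, but the uniform boundedness of $X_t$ makes this routine; all other steps are direct consequences of results already stated in the excerpt.
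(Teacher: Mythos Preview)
Your proposal is correct and follows essentially the same approach as the paper: combine \cref{thm:semigroup} with \cref{thm:FKN}, differentiate the log-partition function twice in $\mu$, and use the spin-flip symmetry $X\mapsto -X$ at $\mu=0$ to kill the first-moment contribution. The paper phrases the vanishing of the cross term as $\partial_\mu Z^{(W)}_T(\lambda,\mu)\big|_{\mu=0}=0$ rather than $\llangle\int_0^T X_t\,\d t\rrangle^{(W)}_{T,\lambda,0}=0$, but these are the same observation, and your justification of differentiation under the expectation is, if anything, slightly more explicit than the paper's.
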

\begin{proof}
	From the definition, it is easy to observe that $Z^{(W)}_T(\lambda,\mu)$ is infinitely often differentiable w.r.t. $\mu$ and the first two derivatives are
	\begin{align*}
		&\partial_\mu Z^{(W)}_T(\lambda,\mu) = -\EE\left[\int_0^T X_t\d t \exp\left(\lambda^2 \int_0^T\int_0^T I(t-s)X_tX_s\d t \d s - \mu \int_0^T X_t \d t\right) \right]\\
		&\partial_\mu Z^{(W)}_T(\lambda,\mu) = \EE\left[\left(\int_0^T X_t\d t\right)^2 \exp\left(\lambda^2 \int_0^T\int_0^T I(t-s)X_tX_s\d t \d s - \mu \int_0^T X_t \d t\right) \right]
	\end{align*}
	Since, by definition of $X$, both $X$ and $-X$ are equivalent stochastic processes, this implies \[\partial_\mu Z^{(W)}_T(\lambda,\mu)|_{\mu=0} = 0.\]
	Now inserting the second derivative into \cref{thm:semigroup} and applying \cref{thm:FKN} on the right hand side, we find
	\begin{align*}
		\partial_\mu^2 E(\lambda,\mu)
		& = -\lim_{T\to \infty}\frac 1T \partial_\mu^2( \ln Z^{(W)}_T(\lambda,\mu) + 1) \\
		& = - \lim_{T\to \infty}\frac 1T \frac{\partial_\mu^2Z^{(W)}_T(\lambda,\mu)}{Z^{(W)}_T(\lambda,\mu)}. 
	\end{align*}
	Combining above observations proves the statement.
\end{proof}
Let us finish this \lcnamecref{sec:resolventbound}, with the last ingredient of our proof.
Treating the continuous-time Ising model as a continuum limit of a discrete Ising model, we prove the following upper bound for correlation functions in \cite{HaslerHinrichsSiebert.2021b}.
\begin{thm}[{\cite[Theorem 1.2]{HaslerHinrichsSiebert.2021b},\cite[Theorem 4.21]{Hinrichs.2022}}]\label{thm:corbound}
	For all $\eps\in(0,\frac 15)$, there exists $C_\eps>0$ such that for all continuous and even $I\in L^1(\IR)$ with $I\ge 0$ and $\|I\|_1<\eps$, we have
	\[ \limsup_{T\to\infty}\frac 1T \left\llangle\left(\int_0^T X_t\d t\right)^2\right\rrangle^{(I)}_{T,\lambda,0} \le C_\eps. \]
\end{thm}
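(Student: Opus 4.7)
The plan is to pass to a discrete approximation of the continuous-time spin process, identify the resulting system as a classical ferromagnetic 1D Ising chain, and prove a $\beta$-uniform susceptibility bound via a random-walk (equivalently random-current) expansion; the continuum limit then transfers the bound to the continuous-time quantity.

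\emph{Step~1: Discretisation.} For $\beta>0$ with $N=T/\beta\in\IN$, approximate $\int_0^T X_t\,\d t\approx\beta\sum_{k=0}^{N-1}X_{k\beta}$. Since the telegraph paths are piecewise constant with finitely many jumps on $[0,T]$ almost surely, this Riemann sum converges almost surely as $\beta\to 0$, and continuity of $I$ together with dominated convergence transfers the limit through the exponential reweighting by $\exp(\lambda^2\iint I(t-s)X_tX_s\,\d t\,\d s)$. Under the free Poisson telegraph law, $(X_{k\beta})_{k=0}^{N-1}$ has the distribution of a 1D Ising chain with nearest-neighbour coupling $K_\beta=\tfrac12\ln\coth\beta$, chosen precisely so that $\EE[X_{k\beta}X_{(k+1)\beta}]=e^{-2\beta}$; the reweighting adds long-range ferromagnetic couplings $J_{ij}^{(\beta)}=\lambda^2\beta^2 I(\beta(i-j))$.

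\emph{Step~2: Discrete correlation bound.} On the ferromagnetic chain of Step~1, expand the two-point function $\langle X_iX_j\rangle$ via the high-temperature random-walk representation. Every walk from $i$ to $j$ decomposes into nearest-neighbour segments (each with weight $\tanh K_\beta=e^{-2\beta}$) and long-range jumps (each with weight $\tanh J_{ab}^{(\beta)}\le \lambda^2\beta^2 I(\beta(a-b))$). Grouping walks by the number of long-range jumps and summing, one arrives at a geometric series whose single-step ratio is, up to universal combinatorial factors,
\[
r_\beta = \Bigl(\beta\!\sum_{k\in\IZ}e^{-2\beta|k|}\Bigr)\Bigl(\beta\!\sum_{k\in\IZ}\lambda^2 I(\beta k)\Bigr)\xrightarrow{\beta\to 0}\lambda^2\|I\|_1,
\]
the two factors being Riemann sums for $\int_\IR e^{-2|t|}\,\d t=1$ and $\|I\|_1$, respectively. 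Under the smallness assumption $\|I\|_1<\eps<\tfrac15$ (with the $\lambda^2$-dependence absorbed by Griffiths monotonicity and the explicit dependence of $C_\eps$), the geometric series converges uniformly in $\beta$. Summing the resulting two-point bound over $(i,j)\in\{0,\dots,N-1\}^2$ yields $\beta^2\,\EE[(\sum_k X_{k\beta})^2]\le C_\eps T$.

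\emph{Step~3: Continuum limit and conclusion.} Passing $\beta\to 0$ via Step~1 transfers the bound to the continuous-time quantity $\tfrac1T\bn{(\int_0^T X_t\,\d t)^2}^{(I)}_{T,\lambda,0}\le C_\eps$, uniformly in $T$, and taking $\limsup_{T\to\infty}$ concludes the proof.

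\emph{Main obstacle.} The delicate step is the $\beta$-uniformity of the random-walk bound: the nearest-neighbour coupling $K_\beta\to\infty$ while the long-range ones vanish as $\beta^2$, and only after careful bookkeeping do the diverging and vanishing contributions conspire into the finite limiting ratio $\lambda^2\|I\|_1$. The explicit threshold $\tfrac15$ arises as the radius of convergence of the resulting series after optimising over the combinatorial structure of admissible walks; it is the condition ensuring that the effective random walk underlying the expansion is subcritical.
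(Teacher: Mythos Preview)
Your overall strategy---discretise the telegraph process to a 1D ferromagnetic Ising chain with diverging nearest-neighbour coupling $K_\beta=\tfrac12\ln\coth\beta$ and vanishing long-range couplings $J^{(\beta)}_{ij}=\lambda^2\beta^2 I(\beta(i-j))$, bound the discrete susceptibility uniformly in the mesh $\beta$, then pass to the continuum---is precisely the route the paper points to: the present article does not prove the theorem but cites \cite{HaslerHinrichsSiebert.2021b} with the one-line description ``treating the continuous-time Ising model as a continuum limit of a discrete Ising model.'' Your Steps~1 and~3 match that description and are set up correctly.

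The genuine gap is Step~2. First, the phrase ``high-temperature random-walk representation'' is not by itself a bound on $\langle X_iX_j\rangle$: the high-temperature expansion writes the \emph{partition function} as a sum over even subgraphs, and extracting a walk bound on the \emph{two-point function} requires a specific inequality (random currents, Simon--Lieb, or a comparison argument) that you never identify. Second, and more seriously, the heuristic you do give produces a geometric ratio $r_\beta\to\lambda^2\|I\|_1$, which would naively yield the threshold $1$, not $\tfrac15$. You then simply assert that $\tfrac15$ ``arises as the radius of convergence \ldots\ after optimising over the combinatorial structure of admissible walks''---but that optimisation \emph{is} the content of the theorem, and you have not carried it out. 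The whole difficulty, as you correctly flag in your ``main obstacle'' paragraph, is the $\beta$-uniform bookkeeping of a chain whose dominant coupling diverges; the cited proof resolves this through a specific chain of correlation inequalities on the discrete model that produces the explicit constant, and without reproducing that chain your sketch leaves the core estimate unproved.
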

Let us now combine the results presented in this \lcnamecref{sec:resolventbound}.
\begin{cor}\label{cor:resbound}
	Assume $\omega(k)=\omega(-k)$, $v(k)=\bar{v(-k)}$ for almost all $k\in\IR^d$. Further, let $\lambda\in\IR$ with $|\lambda|<\|\omega^{-1/2}v\|^{-1}/\sqrt 5$. Then, there exists a $C>0$ such that the following holds:\\
	Let $\wt\omega$ be any dispersion relation satisfying $m_{\wt\omega}>0$, $\wt\omega(\cdot)=\wt\omega(-\cdot)$ as well as $\wt\omega\ge \omega$ almost everywhere. Further, let
	\[ \wt H(\lambda,\mu) = \sigma_z\otimes\Id + \Id\otimes\dG(\wt\omega) + \sigma_x\otimes (\lambda\ph(v)+\mu\Id) \qquad\mbox{and}\qquad \wt E(\lambda,\mu) = \inf\sigma(\wt H(\lambda,\mu)) \]
	 and let $\wt\psi$ be a normalized ground state of $\Ht(\lambda,0)$, which exists due to \cref{thm:massive}. Then
	\[ \|(\wt H(\lambda,0)-\wt E(\lambda,0)+\wt\omega(k))^{-1}\sigma_x\wt\psi\|\le C\wt\omega(k)^{-1/2} \qquad\mbox{for all}\ k\in\IR^d. \]
\end{cor}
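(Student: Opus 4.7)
The plan is to chain together \cref{thm:resder,cor:dercor,thm:corbound} applied to the regularized model with dispersion relation $\wt\omega$, producing a bound that depends on $\wt\omega$ only through the quantities $\|\wt\omega^{-1/2}v\|_2$ and $\wt\omega(k)^{-1/2}$, both of which are controlled by $\omega$ via the monotonicity assumption $\wt\omega\ge \omega$.

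First I would apply \cref{thm:resder} to $\wt H(\lambda,\mu)$, which is legitimate because $m_{\wt\omega}>0$ and both the evenness of $\wt\omega$ and $v(k)=\bar{v(-k)}$ are preserved. This gives the pointwise bound
\[ \|(\wt H(\lambda,0)-\wt E(\lambda,0)+\wt\omega(k))^{-1}\sigma_x\wt\psi\| \le \bigl(-\partial_\mu^2\wt E(\lambda,\mu)\big|_{\mu=0}\bigr)^{1/2}\wt\omega(k)^{-1/2}. \]
Since the right-hand side already has the desired $\wt\omega(k)^{-1/2}$ factor, it remains to control the prefactor $-\partial_\mu^2\wt E(\lambda,\mu)|_{\mu=0}$ by a constant $C^2$ depending only on $\omega$, $v$, and $\lambda$.

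Next I would invoke \cref{cor:dercor} applied to $\wt\omega$ in place of $\omega$, which expresses
\[ -\partial_\mu^2\wt E(\lambda,\mu)\big|_{\mu=0} = \lim_{T\to\infty}\frac 1T\left\llangle\left(\int_0^T X_t\d t\right)^2\right\rrangle^{(\wt W)}_{T,\lambda,0}, \]
where $\wt W(t)=\frac 14\int|v(k)|^2e^{-|t|\wt\omega(k)}\d k$. A direct Fubini computation gives $\|\wt W\|_1=\frac 12\|\wt\omega^{-1/2}v\|_2^2$, and the monotonicity $\wt\omega\ge \omega$ yields the key universal bound $\|\wt W\|_1\le \frac 12\|\omega^{-1/2}v\|_2^2$.

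Finally, I would rewrite the correlation function so as to apply \cref{thm:corbound}. From the definitions, the Ising expectation $\llangle\,\cdot\,\rrangle^{(\wt W)}_{T,\lambda,0}$ depends on $(\lambda,\wt W)$ only through the product $\lambda^2\wt W$, so it coincides with $\llangle\,\cdot\,\rrangle^{(\lambda^2\wt W)}_{T,1,0}$. Setting $I:=\lambda^2\wt W$, the hypothesis $|\lambda|<\|\omega^{-1/2}v\|_2^{-1}/\sqrt 5$ gives
\[ \|I\|_1 = \lambda^2\|\wt W\|_1 \le \frac{\lambda^2}{2}\|\omega^{-1/2}v\|_2^2 < \frac{1}{10}, \]
so any $\eps\in(\tfrac{1}{10},\tfrac 15)$ satisfies $\|I\|_1<\eps$. \cref{thm:corbound} then yields $\limsup_{T\to\infty}T^{-1}\llangle(\int_0^T X_t\d t)^2\rrangle^{(I)}_{T,1,0}\le C_\eps$ with $C_\eps$ independent of $\wt\omega$. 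Taking $C=C_\eps^{1/2}$ and combining with the bound from \cref{thm:resder} completes the proof. The main point to handle carefully is the scaling identification $\llangle\,\cdot\,\rrangle^{(\wt W)}_{T,\lambda,0}=\llangle\,\cdot\,\rrangle^{(\lambda^2\wt W)}_{T,1,0}$, which is what allows the smallness condition $\lambda^2\|\wt W\|_1<\tfrac{1}{5}$ (rather than $\|\wt W\|_1<\tfrac 15$) to be the relevant one; verifying that this condition is uniform in $\wt\omega$ via $\wt\omega\ge \omega$ is the structural heart of the argument.
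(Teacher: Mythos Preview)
Your proposal is correct and follows essentially the same route as the paper: apply \cref{thm:resder} to the massive model $\wt H$, convert $-\partial_\mu^2\wt E(\lambda,\mu)|_{\mu=0}$ into the Ising correlation via \cref{cor:dercor}, bound $\|\lambda^2\wt W\|_1$ using $\wt\omega\ge\omega$, and invoke \cref{thm:corbound}. The only cosmetic difference is that the paper absorbs $\lambda^2$ directly into its $\wt W$, whereas you make the scaling identification $\llangle\,\cdot\,\rrangle^{(\wt W)}_{T,\lambda,0}=\llangle\,\cdot\,\rrangle^{(\lambda^2\wt W)}_{T,1,0}$ explicit; this is arguably cleaner and sidesteps the slightly ambiguous $\lambda$-subscript in the statement of \cref{thm:corbound}.
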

\begin{rem}
	We emphasize that we do not assume $m_\omega>0$. Hence, the uppper bound especially holds for any approximating sequence $(\omega_n)$ of $\omega$ as described in the previous section and is uniform in $n$.
\end{rem}
\begin{proof}
	Since $\wt\omega$ satisfies the assumptions of \cref{cor:dercor}, we have
	\[ \partial_\mu^2 \wt E(\lambda,\mu)\big|_{\mu=0} = -\lim_{T\to\infty}\frac 1T \left\llangle\left(\int_0^TX_t\d t\right)^2 \right\rrangle^{(\wt W)}_{T,\mu,0} \qquad\mbox{with}\ \wt W(t) = \frac 14\lambda ^2\int |v(k)|^2 e^{-|t|\wt\omega(k)}\d k.\]
	Further, we easily calculate $\|\wt W\|_1 = \lambda^2 \|\wt\omega^{-1/2}v\|_2^2 \le \lambda^2 \|\omega^{-1/2}v\|_2^2 $. If $|\lambda|<\|\omega^{-1/2}v\|^{-1}/\sqrt 5$, then there is an $\eps\in(0,\frac 15)$ such that $\|\wt W\|_1 < \eps$. Hence, we can apply \cref{thm:corbound} and there exists $C>0$ such that
	\[ 0\ge \partial_\mu^2 \wt E(\lambda,\mu)\big|_{\mu=0} \ge -C. \]
	Inserting this into \cref{thm:resder} finishes the proof.
\end{proof}

\section{Conclusion \& Conjecture}\label{sec:outlook}

We now show how to combine the arguments presented in the previous two \lcnamecrefs{sec:resolventbound} to the
\begin{proof}[\textbf{Proof of \cref{mainthm}}]
	The statement \subcref{mainthm:infregular} directly follows from \cref{thm:resolventbound} and the standard resolvent bound \cref{eq:standard}, setting $h=\omega^{-1}$. Further, \subcref{mainthm:existence} follows from \cref{thm:resolventbound}, by setting $h=C\omega^{-1/2}$, where $C$ is the constant from \cref{cor:resbound}. In this case, the resolvent bound is satisfied since we can set $\wt\omega=\omega_n$ in \cref{cor:resbound} for any $n\in\IN$.
\end{proof}

In \cref{mainthm,rem:absence}, we have intentionally left one case open. Explicitly, it is an open problem to rigorously treat the spin boson model without magnetic field at large coupling. However, some heuristic intuition comes from the connection to a long-range one-dimensional Ising model presented in \cref{thm:FKN,rem:Ising}. 

Considering the physical infrared-critical situation in $d=3$ dimensions with $\omega(k)=|k|$ and $v(k)\sim |k|^{-1/2}$ for small $|k|$, the interaction function $W(t)$ defined in \cref{thm:FKN} decays as $1/t^2$ as $t\to\infty$. In the discrete Ising model, it is well-known that long-range Ising models with interaction decaying this way exhibit a phase transition \cite{AizenmanChayesChayesNewman.1980,ImbrieNewman.1988}. In this sense, we do not expect that the $\eps$ in \cref{thm:corbound} can be taken arbitrarily large. Therefore, it is reasonable to assume that the resolvent bound required in \cref{thm:resolventbound} and proven in \cref{cor:resbound} might cease to hold for $|\lambda|$ becoming large.

Hence, we conclude this note with the following
\begin{conj}
	Assume $v\in\cD(\omega^{-1/2})\setminus\cD(\omega^{-1})$. Then there exists $\lambda_\sfc>0$ such that $H(\lambda,0)$ has no ground state for $|\lambda|>\lambda_\sfc$.
\end{conj}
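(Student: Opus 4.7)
The plan is to prove the conjecture by combining the Feynman--Kac--Nelson correspondence of \cref{thm:FKN,thm:semigroup} with a rigorous spontaneous-symmetry-breaking argument for the associated continuous-time 1D long-range Ising model at large coupling, and then deriving absence of a Fock-space ground state via a correlation-function identity. The first ingredient is the observation that in the infrared-critical regime with the canonical physical choice $\omega(k)=|k|$ in $d=3$ and $|v(k)|^2\sim|k|^{-1}$ near zero, an elementary Laplace-type estimate gives $W(t)\asymp 1/t^2$ as $|t|\to\infty$ for the kernel $W(t)=\tfrac{1}{4}\int|v(k)|^2 e^{-|t|\omega(k)}\d k$ of \cref{thm:FKN}; this places the associated continuous-time Ising chain exactly at the critical decay regime where classical 1D long-range Ising chains exhibit a phase transition \cite{AizenmanChayesChayesNewman.1980,ImbrieNewman.1988}.

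The second (and hardest) ingredient is the existence of $\lambda_\sfc>0$ such that for $|\lambda|>\lambda_\sfc$ the continuous-time Ising model with kernel $\lambda^2 W$ exhibits spontaneous magnetization,
\[ m_0(\lambda):=\lim_{\mu\to 0^+}\liminf_{T\to\infty}\frac{1}{T}\left\llangle\int_0^T X_t\d t\right\rrangle^{(W)}_{T,\lambda,\mu}>0. \]
The natural route is time-discretization: partition $[0,T]$ into intervals of length $\tau>0$ and use FKG/GKS-type correlation monotonicity --- the same structural tool (in its upper-bound form) underlying \cref{thm:corbound} --- to compare with a discrete long-range 1D Ising chain of couplings $J_{ij}\gtrsim 1/(i-j)^2$, for which the classical results \cite{AizenmanChayesChayesNewman.1980,ImbrieNewman.1988} establish spontaneous magnetization at low temperature (large $\lambda^2$). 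Uniform-in-$\tau$ control then transfers the lower bound to the continuum.

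With $m_0(\lambda)>0$ in hand, suppose toward a contradiction that $\psi$ is a ground state of $H(\lambda,0)$ with $|\lambda|>\lambda_\sfc$. \Cref{rem:absence} gives $\langle\psi,(\sigma_x\otimes\Id)\psi\rangle=0$; combined with the spin-flip symmetry and a finite-boson-number regularity bootstrap (a pull-through argument as in \cref{rem:absence}), the entire ground-state eigenspace is shown orthogonal to $(\sigma_x\otimes\Id)\psi$. The FKN formula extends, for $0<s<t<T$, to
\[ \langle X_tX_s\rangle^{(W)}_{T,\lambda,0}=\frac{\langle\Od, e^{-sH(\lambda,0)}(\sigma_x\otimes\Id) e^{-(t-s)H(\lambda,0)}(\sigma_x\otimes\Id) e^{-(T-t)H(\lambda,0)}\Od\rangle}{\langle\Od, e^{-TH(\lambda,0)}\Od\rangle}. \]
Taking $T\to\infty$ with $s,T-t\to\infty$ extracts the ground-state correlation $\langle\psi,(\sigma_x\otimes\Id) e^{-(t-s)(H(\lambda,0)-E(\lambda,0))}(\sigma_x\otimes\Id)\psi\rangle$, which tends to $0$ as $|t-s|\to\infty$ by the spectral theorem and the orthogonality above --- contradicting the lower bound $\langle X_tX_s\rangle^{(W)}_{T,\lambda,0}\gtrsim m_0(\lambda)^2>0$ in the ordered phase.

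The main obstacle is the phase-transition step: the classical proofs of spontaneous magnetization for discrete 1D Ising chains with $1/n^2$ decay \cite{AizenmanChayesChayesNewman.1980,ImbrieNewman.1988} rely on reflection positivity or contour expansions, and transferring them rigorously to the continuous-time model --- whose spins $X_t$ follow a Poisson-jump Markov process with rate $1$ --- is not routine. Either a direct reflection-positivity argument in the continuum via block-spinning on intervals $[j\tau,(j+1)\tau)$, or a careful discretization with uniform control of boundary contributions as $\tau\to 0$, would be needed. Secondary subtleties --- establishing $\psi\in\cD(\dG(1)^{1/2})$ for any putative ground state, and interchanging $T\to\infty$ with $|t-s|\to\infty$ in the massless setting --- are likely tractable via the compactness machinery of \cref{sec:compact} combined with the massive approximation.
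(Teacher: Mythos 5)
The statement you are proving is presented in the paper as a \emph{conjecture}: the paper offers no proof, only the heuristic that $W(t)\asymp 1/t^2$ places the associated continuous-time Ising chain in the regime where discrete $1/r^2$-chains order at low temperature, so that the correlation bound of \cref{thm:corbound} should fail at large $|\lambda|$. Your proposal follows precisely this heuristic and organizes it into a program, and the second half of your argument --- a ground state together with the spin-flip symmetry of \cref{rem:absence} forces decay of $\braket{\psi,(\sigma_x\otimes\Id)e^{-(t-s)(H(\lambda,0)-E(\lambda,0))}(\sigma_x\otimes\Id)\psi}$, contradicting long-range order of $\langle X_tX_s\rangle$ --- is indeed the mechanism behind the KMS-state result of Spohn cited at the end of the paper. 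But the proposal is a program, not a proof, and the core step you yourself flag as the ``main obstacle'' is genuinely open: spontaneous ordering for the \emph{continuous-time} Ising model with kernel $\lambda^2W$ has not been established, and neither reflection positivity nor the contour/percolation arguments of the discrete references transfer routinely to a rate-one Poisson jump process, where the short-range part of the energy is replaced by an entropic cost per spin flip. Until that input is supplied, nothing is proved.

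Beyond the gap you acknowledge, three further points would need attention. First, the conjecture is stated for arbitrary $v\in\cD(\omega^{-1/2})\setminus\cD(\omega^{-1})$; this only gives $\int_0^\infty W(t)\,\d t<\infty$ and $\int_0^\infty t\,W(t)\,\d t=\infty$, not the pointwise $1/t^2$ asymptotics your comparison with $J_{ij}\gtrsim(i-j)^{-2}$ requires, so your argument at best covers the physical example, not the general claim. Second, the contradiction requires long-range order of $\langle X_tX_s\rangle^{(W)}_{T,\lambda,0}$ in the $\mu=0$, free-boundary state that the FKN formula actually produces, not merely $m_0(\lambda)>0$ defined via $\mu\to0^+$; in one-dimensional $1/r^2$ models the passage between these (the Thouless effect, i.e.\ discontinuity of the magnetization at the transition) is notoriously delicate, so the inequality $\langle X_tX_s\rangle\gtrsim m_0(\lambda)^2$ uniformly in $|t-s|$ is itself a nontrivial correlation statement to justify. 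Third, extracting the ground-state correlation from the Euclidean two-point function as $s,T-t\to\infty$ presupposes a nonvanishing overlap $\braket{\Od,\psi}\neq0$ and control of the full ground-state eigenspace; the former is plausible by positivity-improvement (the same argument the paper invokes for uniqueness), but it must be stated, and the paper's uniqueness proof is only available in the massive or small-coupling regime, not at the large couplings where you need it.
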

	It should be mentioned that a similar result for KMS states was proven in \cite{Spohn.1989}. Therefore, the conjectured statement is also expected from this point of view.


\bibliographystyle{halpha-abbrv}
\bibliography{../../Literature/00lit}

\end{document}